\DeclarePairedDelimiter{\ceil}{\lceil}{\rceil}
\newcommand{\eps}{\varepsilon}
\renewcommand{\epsilon}{\eps}
\newcommand{\ignore}[1]{}
\newcommand{\E}[1]{\mathbb{E}\left[#1\right]}
\crefname{observation}{observation}{\bf Observation}
\begin{document}

\title{Sample Complexity for Winner Prediction in Elections}

\author{Arnab Bhattacharyya and Palash Dey\\ \texttt{\{arnabb,palash\}@csa.iisc.ernet.in}}

\institute{Department of Computer Science and Automation \\Indian Institute of Science - Bangalore, India.\\[10pt]Date: \today}

\maketitle
\pagestyle{fancy}


\begin{abstract}

Predicting the winner of an election is a favorite problem both for news media pundits and computational social choice theorists. Since it is often infeasible to elicit the preferences of all the voters in a typical prediction scenario, a common algorithm used for winner prediction is to run the election on a small sample of randomly chosen votes and output the winner as the prediction. We analyze the performance of this algorithm for many common voting rules.

More formally, we introduce the {\em $(\epsilon, \delta)$-winner determination problem}, where given an election on $n$ voters and $m$ candidates in which the margin of victory is at least $\epsilon n$ votes, the goal is to determine the winner with probability at least $1-\delta$. The margin of victory of an election is the smallest number of votes that need to be modified in order to change the election winner.  We show interesting lower and upper bounds on the number of samples needed to solve the $(\epsilon, \delta)$-winner determination problem for many common voting rules, including scoring rules, approval, maximin, Copeland, Bucklin, plurality with runoff, and single transferable vote. Moreover, the lower and upper bounds match for many common voting rules in a wide range of practically appealing scenarios. 
\end{abstract}

\keywords{Computational social choice, winner determination, voting, sampling, prediction, polling}

\newpage

\section{Introduction}

A common and natural way to aggregate preferences of agents is through
an {\em election}.  In a typical election, we have a set of candidates and a set of
voters, and each voter reports his preference about the candidates in the
form of a {\em vote}. We will assume that each vote is a
ranking of all the candidates. A {\em voting rule} selects one candidate as the winner
once all voters provide their votes. Determining the winner of an election is one of the most
fundamental  problems in social choice theory. 

In many situations, however, one wants to predict the winner without
holding the election for the entire population of voters. The
most immediate such example is an {\em election poll}. Here, the
pollster wants to quickly gauge public opinion in order to predict the
outcome of a full-scale election. For political elections,
exit polls (polls conducted on voters after they have
voted) are widely used by news media to predict the winner before 
official results are announced. In {\em surveys}, a full-scale
election is never conducted, and the goal is to determine the winner,
based on only a few sampled votes, for a hypothetical election on all the voters. For instance, it is
not possible to force all the residents of a city to fill out an
online survey to rank the local Chinese restaurants, and so only
those voters who do participate have their preferences aggregated. 

If the result of the poll or the survey has to reflect the true
election outcome, it is obviously necessary that the number of sampled
votes not be too small. Here, we investigate this fundamental
question: 

\begin{quote}
What is the minimum number of votes that need to be sampled
so that the winner of the election on the sampled votes is the
same as the winner of the election on all the votes?
\end{quote}

This question can be posed for any voting rule. The most immediate rule
to study is  the {\em plurality} voting rule, where each voter votes
for a single candidate and the candidate with most votes
wins. Although the plurality rule is the most common voting rule used
in political elections, it is important to extend the 
analysis to other popular voting rules. For example, the {\em single
  transferable vote} rule is used in political elections in Australia,
India and Ireland, and it was the subject of a nationwide referendum
in the UK in 2011. The {\em Borda} voting rule is used in the
Icelandic parliamentary elections. Outside politics, in private
companies and competitions, a wide variety of voting rules are
used. For example, the {\em approval} voting rule has been used by the
Mathematical Association of America, the American Statistical
Institute, and the Institute of Electrical and Electronics
Engineers, and {\em Condorcet consistent} voting rules are used
by many free software organizations. Section \ref{sec:prelim}
discusses the most common voting rules in use.

Regardless of the voting rule, though, the question of finding the
minimum number of vote samples required becomes trivial if a single
voter in the election can change the winning candidate. In this
case, all the votes need to be counted, because otherwise that single
crucial vote may not be sampled. We get around this problem by
assuming that in the elections we consider, the winning candidate wins
by a considerable {\em margin of victory}. Formally, the margin of
victory for an election is defined as the minimum number of votes that
must be changed in order to change the election winner. Note that the
margin of victory depends not only on the votes cast but also on the
voting rule used in the election.

\subsection{Our Contributions}

Let the number of voters be $n$ and the number of candidates $m$.
We introduce and study the following problem\footnote{Throughout this
  section, we use standard terminlogy from voting theory. For formal
  definitions, refer to Section \ref{sec:prelim}.}:
  
\begin{definition}($(\epsilon, \delta)$-winner determination)\\
 Given a voting rule and a set of $n$ votes over a set of $m$
 candidates such that the margin of victory is at least $\epsilon n$,  
 determine the winner of the election with probability at least
 $1-\delta$. (The probability is taken over the internal coin tosses of
 the algorithm.) 
\end{definition}

We remind the reader that there is no assumption about the
distribution of votes in this problem. Our goal is to solve the
$(\eps, \delta)$-winner determination problem by a randomized
algorithm that is allowed to query the votes of arbitrary voters. Each
query reveals the full vote of the voter. The minimum number of votes
queried by any algorithm that solves the $(\eps, \delta)$-winner
determination problem is termed the {\em   sample complexity}. The
sample complexity  can of course depend on $\eps$, $\delta$, $n$, $m$,
and the voting rule in use. 

A standard result \citep{canetti1995lower} shows that solving the above
problem for the majority rule on $2$ candidates requires at least
$\Omega(1/\eps^2  \log 1/\delta)$ samples (\Cref{thm:lb}). Also, a
straightforward argument (Theorem \ref{thm:gen}) using Chernoff bounds shows that for any homogeneous 
voting rule, the sample complexity is at most $O(m!^2/\eps^2 \cdot
\log(m!/\delta))$. So, when $m$ is a constant,  the sample complexity is of the order
$\Theta(1/\eps^2 \log 1/\delta)$ for any homogeneous voting rule that reduces to majority on
$2$ candidates (as is the case for all rules commonly used). Note that this bound is independent
of $n$ if $\eps$ and $\delta$ are constants, for any reasonable voting
rule!   

Our main technical contribution is in understanding the dependence of
the sample complexity on $m$, the number of candidates. Note
that the upper bound cited above has very bad dependence on $m$ and is
clearly unsatisfactory in situations when $m$ is large (such as in
online surveys about restaurants). 
\begin{itemize}
 \item We show that the sample complexity of the   $(\epsilon,
   \delta)$-winner determination problem is
   $\Theta(\frac{1}{\epsilon^2}\log \frac{1}{\delta})$  for the
   $k$-approval voting rule when $k=o(m)$ (\Cref{thm:kapp})  and the
   plurality with runoff voting rule (\Cref{thm:runoff}). In
   particular, for the plurality rule, the
   sample complexity is independent of $m$ as well as $n$!
\ignore{We also prove that, any $(\epsilon, \delta)$-winner
  determination algorithm for any voting that is same as the plurality
  voting rule for elections with two candidates has a sample
  complexity of $\Omega(\frac{1}{\epsilon^2}\log \frac{1}{\delta})$
  [\Cref{cor:lb}]. We remark that, all the commonly used voting rules
  that we know of including the ones that have been studied in this
  paper coincides with the plurality voting rule for elections with
  two candidates.}

 \item We show that the sample complexity of the $(\epsilon,
   \delta)$-winner determination problem is $O(\frac{\log(m/\delta)}{\epsilon^2})$ and $\Omega(\frac{\log
     m}{\epsilon^2} (1-\delta))$for the $k$-approval voting
   rule when $k=cm$ with $0<c<1$ (\Cref{thm:scr}),  Borda (\Cref{thm:strlwb}),
   approval (\Cref{thm:app}), maximin (\Cref{thm:maximin}), and
   Bucklin (\Cref{thm:bucklin}) voting rules. Note that when $\delta$
   is a constant, the upper and lower bounds match up to
   constants. We observe a surprising jump in the sample complexity of the $(\epsilon,
   \delta)$-winner determination problem by a factor of $\log m$ for the 
   $k$-approval voting rule as $k$ varies from $o(m)$ to $cm$ with $c\in(0,1)$.
 \item We show a sample complexity upper bound of $O(\frac{\log^3 \frac{m}{\delta}}{\epsilon^2})$ for the $(\epsilon,
   \delta)$-winner determination problem for the Copeland$^\alpha$ voting rule
   (\Cref{thm:copeland}) and $O(\frac{m^2(m+\log
     \frac{1}{\delta})}{\epsilon^2})$ for the STV  voting rule
   (\Cref{thm:stv}).
\end{itemize}

We summarize the results in Table 1.

\begin{table}[htbp]
  \begin{center}
  {\renewcommand{\arraystretch}{1.7}
 \begin{tabular}{|c|c|c| }\hline
  \textbf{Voting Rule}	& \multicolumn{2}{c|}{\textbf{Sample complexity}} \\\hline\hline
  $k$-approval	& $O(\frac{1}{\epsilon ^2}\log \frac{k}{\delta})$\Cref{thm:kapp} & $\Omega(\frac{\log (k+1)}{\epsilon^2}.\left( 1 - \delta \right))$\Cref{thm:strlwb}  \\\hline
  Scoring Rules	& \multirow{2}{*}{$O(\frac{\log \frac{m}{\delta}}{\epsilon^2})$\Cref{thm:scr}} &  \\\cline{1-1}
  Borda	&  & \multirow{5}{*}{$\Omega (
    \frac{\log m}{\epsilon^2}.\left( 1 - \delta \right) ) ^\dagger$ \Cref{thm:strlwb}} \\\cline{1-2}
  Approval	& $O(\frac{\log \frac{m}{\delta}}{\epsilon^2})$\Cref{thm:app} &  \\\cline{1-2}
  Maximin	& $O(\frac{\log \frac{m}{\delta}}{\epsilon^2})$\Cref{thm:maximin}&  \\\cline{1-2}
  Copeland	& $O(\frac{\log^3 \frac{m}{\delta}}{\epsilon^2})$\Cref{thm:copeland}&		\\\cline{1-2}
  Bucklin	& $O(\frac{\log \frac{m}{\delta}}{\epsilon^2})$\Cref{thm:bucklin}	& 	\\ \hline
  Plurality with runoff & $O(\frac{\log \frac{1}{\delta}}{\epsilon^2})$\Cref{thm:runoff}& \multirow{3}{*}{$\Omega(\frac{1}{\epsilon ^2}\log \frac{1}{\delta})^\ast$ \Cref{cor:lb}}\\\cline{1-2}
  STV		& $O(\frac{m^2(m+\log \frac{1}{\delta})}{\epsilon^2})$\Cref{thm:stv}&		\\\cline{1-2}
  Any homogeneous voting rule		& $O(\frac{m!^2 \log \frac{m!}{\delta}}{\epsilon^2})$\Cref{thm:gen}&		\\\hline
 \end{tabular}
 }
  \caption{\normalfont Sample complexity of the $(\epsilon,
    \delta)$-winner determination problem for various voting
    rules. $\dagger$--The lower bound of $\Omega ( \frac{\log
      m}{\epsilon^2}.\left( 1 - \delta \right) )$ also applies to any voting rule that is 
    Condorcet consistent. ${\ast}$-- The lower bound of $\Omega(\frac{1}{\epsilon ^2}\log
  \frac{1}{\delta})$ holds for any voting rule
  that reduces to the plurality voting rule for elections with two candidates.}
  \end{center}
\end{table} 

The rest of the paper is organized as follow. We introduce the terminologies and define the problem formally in Section \ref{sec:prelim}; we present the results on lower bounds in Section \ref{sec:lwb}; Section \ref{sec:upbd} contains the results on the upper bounds for various voting rules; finally, we conclude in Section \ref{sec:con}.

This paper is a significant extension of the conference version of this work~\cite{deysampling}: this extended version includes all the proofs.

\subsection{Related Work}

The subject of voting is at the heart of (computational) social choice
theory, and there is a vast amount of literature in this area.  Elections take place not
only in human societies but also in manmade social networks
\citep{boldi2009voting,rodriguez2007smartocracy} and, generally, in
many multiagent systems \citep{ephrati1991clarke,pennock2000social}. The winner determination
problem is the task of finding the winner in an election, given the
voting rule in use and the set of all votes cast. It is known that
there are natural voting rules, e.g., Kemeny's rule and Dodgson's
method, for which the winner determination problem is
\textsf{NP}-hard \citep{bartholdi1989voting,hemaspaandra2005complexity,hemaspaandra1997exact}. 

The general question of whether the outcome of an election can be
determined by less than the full set of votes is the subject of {\em
  preference elicitation}, a central category of problems in AI. The
$(\eps, \delta)$-winner determination problem also falls in this area
when the elections are restricted to those having margin of victory
at least $\eps n$. For general elections, the preference elicitation
problem was studied by Conitzer and Sandholm \citep{conitzer2002vote},
who defined an elicitation policy as an adaptive sequence of questions
posed to voters. They proved that finding an efficient elicitation
policy is \textsf{NP}-hard for many common voting rules. Nevertheless,
several elicitation policies have been developed in later work
\citep{conitzer2009eliciting, lu2011robust, lu2011vote, ding2012voting,
oren2013efficient} that
work well in practice and have formal guarantees under various
assumptions on the vote distribution. Another related work is that of
Dhamal and Narahari \citep{dhamal2013scalable} who show that if the
voters are members of a social network where neighbors in the network
have similar candidate votes, then it is possible to elicit the
votes of only a few voters to determine the outcome of the full
election. 

In contrast, in our work, we posit no assumption on the vote
distribution other than that the votes create a substantial margin of
victory for the winner. Under this assumption, we show that even for
voting rules in which winner determination is \textsf{NP}-hard in the
worst case, it is possible to sample a small number of votes to
determine the winner. Our work falls inside the larger framework of
{\em property testing} \citep{ron2001property}, a class of problems studied in theoretical
computer science, where the inputs are promised to either satisfy some
property or have a ``gap'' from instances satisfying the property. In
our case, the instances are elections which either have some candidate $w$ as the
winner or are ``far'' from having $w$ being the winner (in the sense
that many votes need to be changed).

The basic model of elections has been generalized in several other ways to capture
real world situations.  One important consideration is that the votes may be incomplete
rankings of the candidates and not a complete ranking. There can also
be uncertainty over which voters and/or candidates will eventually
turn up. The uncertainty may additionally come up from the voting rule that will be used eventually to select the winner. In these incomplete information settings, several winner models have been proposed, for example, robust winner~\citep{boutilier2014robust,lu2011robust,shiryaev2013elections}, 
multi winner~\citep{lu2013multi}, stable winner~\citep{falik2012coalitions}, approximate winner~\citep{doucetteapproximate}, 
probabilistic winner~\citep{bachrach2010probabilistic}. 
Hazon et al.~\citep{hazon2008evaluation} proposed useful methods to
evaluate the outcome of an election under various uncertainties. We do
not study the role of uncertainty in our work.

\paragraph{ Organization}
We formally introduce the terminologies in Section \ref{sec:prelim};
we present the results on lower bounds in Section \ref{sec:lwb};
Section \ref{sec:upbd}  contains the results on the upper bounds for
various voting rules;  finally, we conclude in Section \ref{sec:con}.

\ignore{

In this work, we assume that all the votes are complete rankings of the candidates. 
In this setting, there are voting rules, for example, kemeny's rule, Dodgson's method, etc., 
for which determining winner is computationally intractable~\citep{bartholdi1989voting,hemaspaandra2005complexity,hemaspaandra1997exact}. However, 
even if determining winner is polynomial time solvable for a voting rule, we need to see all the $n$ votes to determine 
the winner of an election in the worst case for all the common voting rules. In an election with a very large number of voters, for example, elections in a social network~\citep{boldi2009voting,rodriguez2007smartocracy}, it may become a stringent requirement to have access to all the votes. Thus, we relax the problem of winner determination to $\delta$-error winner determination. An algorithm is said to solve the $\delta$-error winner determination problem if it is able to output the winner of an election with probability of error not more than $\delta$, where the probability is taken over the internal coin tosses of the algorithm. 

There are various reasons due to which one would like to determine the winner of an election by observing only a few votes. For example, there can be cost overhead involved in getting a vote as seen in polls or there can be privacy constraints for which we would like to access only few votes. Dhamal et al.~\citep{dhamal2013scalable} empirically studied preference aggregation over a social network by observing a few influential nodes. We, in this work, do not assume any information about the voters in contrast to Dhamal et al., where they know the social network structure among the voters. Xia introduced a concept called the \textit{margin of victory (\textsf{MOV})} of an election which is the minimum number of votes that need to be modified to change the winner~\citep{xia2012computing}. We show that if the margin of victory in an election is $\Theta(n)$, then we can solve the $\delta$-error winner determination problem by sampling only a few votes; the number of samples required is independent of the number of votes (see Table 1). The number of sample used by an algorithm is called its sample complexity. The minimum number of sample needed to solve a problem is called the sample complexity of that problem. We show that if the margin of victory is $o(n)$, then the sample complexity of $\delta$-error winner determination problem is $\omega_n(1)$. We show useful theoretical lower and upper bounds on the sample complexity of $\delta$-error winner determination problem for various voting rules when the margin of victory is $\Theta(n)$ (see Table 1). 
}

\section{Preliminaries}\label{sec:prelim}
\ignore{We denote the set $\{0,1, 2, \cdots\}$ by 
$\mathbb{N}$ and $\mathbb{N}^+ = \mathbb{N}\setminus\{0\}$. We denote the set $\{1, \cdots, k\}$ by $[k]$. 
For any two real numbers $s,t\in \mathbb{R}$, we denote the closed interval $\{x\in \mathbb{R} : s\le x\le t\}$ by $[s,t]$ 
and the open interval $\{x\in \mathbb{R} : s < x < t \}$ by $(s,t)$.
Let $\uplus$ denotes the disjoint union of sets. 
}

\subsection{Voting and Voting Rules}
Let $\mathcal{V}=\{v_1, \dots, v_n\}$ be the set of all \emph{voters} and $\mathcal{C}=\{c_1, \dots, c_m\}$ 
the set of all \emph{candidates}. 
Each voter $v_i$'s \textit{vote} is a  complete order over $\succ_i$ over the candidates $\mathcal{C}$. 
For example, for two candidates $a$ and $b$, $a \succ_i b$ means that the voter $v_i$ prefers $a$ to $b$. 
We denote the set of all complete orders over $\mathcal{C}$ by $\mathcal{L(C)}$. 
Hence, $\mathcal{L(C)}^n$ denotes the set of all $n$-voters' preference profiles $(\succ_1, \dots, \succ_n)$. 
\ignore{We denote the $(n-1)$-voters' preference profile $(\succ_1, \dots, \succ_{i-1}, \succ_{i+1}, \dots, \succ_n)$ 
by $\succ_{-i}$. }

A map $r:\uplus_{n,|\mathcal{C}|\in\mathbb{N}^+}\mathcal{L(C)}^n\longrightarrow \mathcal{C}$
is called a \emph{voting rule}. Given a vote profile $\succ \in
\mathcal{L}(\mathcal{C})^n$, we call $r(\succ)$ the {\em winner}. Note
that in this paper, each election has  a unique winner, and we ignore the
possibility of ties. A voting rule is called {\em homogeneous} if it selects the winner solely based on the 
fraction of times each complete order from $\mathcal{L(C)}$ appears as a vote in the election. All the commonly used voting rules 
including the ones that are studied in this paper are homogeneous.

\ignore{A map 
$t:2^\mathcal{C}\setminus\{\emptyset\}\longrightarrow \mathcal{C}$ is called a \emph{tie breaking rule}.
\ignore{A commonly used tie breaking rule is \emph{lexicographic} tie breaking rule where ties are broken 
according to a predetermined preference $\succ_t \in \mathcal{L(C)}$. }
A \emph{voting rule} is $r=t\circ r_c$, where $\circ$ denotes composition of mappings. 
\ignore{We restrict ourselves to the lexicographic tie-breaking rule
  in this work. }
In this paper, we do not take tie-breaking rules into account and, thus, will use the terms 
voting correspondence and voting rule interchangeably. Hence there can be 
more than one winners in case of a tie. We say a candidate wins uniquely, if it does not 
tie with any other candidate. If not mentioned explicitly, by winning we would mean winning uniquely. }

Given an election $E$, we can construct a weighted graph $G_E$ called 
\textit{weighted majority graph} from $E$. The set of vertices in $G_E$ is the set of candidates in $E$. 
For any two candidates $x$ and $y$, the weight on the edge $(x,y)$ is $D_E(x,y) = N_E(x,y) - N_E(y,x)$, 
where $N_E(x,y)(\text{respectively }N_E(y,x))$ is the number of voters who prefer $x$ to $y$ (respectively $y$ to $x$).\ignore{ We will call $D(x,y)$ 
to be the margin of victory of $x$ against $y$ in their pairwise election. }
A candidate $x$ is called the {\em Condorcet winner} in an election $E$ if $D_E(x,y) > 0$ for every other 
candidate $y \ne x$. A voting rule is called {\em Condorcet consistent} if it selects the Condorcet winner 
as the winner of the election whenever it exists. 
\ignore{A voting rule is called monotonic if improving winner's position in any vote without changing 
the ordering among other candidates does not change the winner.}

Some examples of common voting rules\footnote{In all these rules, the
  possibilities of ties exist. If they do happen, we assume that some
  arbitrary but fixed tie breaking rule is applied.} are:

\begin{itemize}

\item \textbf{Positional scoring rules:} A collection of $m$-dimensional vectors $\vec{s}_m=\left(\alpha_1,\alpha_2,\dots,\alpha_m\right)\in\mathbb{R}^m$ 
 with $\alpha_1\ge\alpha_2\ge\dots\ge\alpha_m$ and $\alpha_1>\alpha_m$ for every $m\in \mathbb{N}$ naturally defines a
 voting rule -- a candidate gets score $\alpha_i$ from a vote if it is placed at the $i^{th}$ position, and the 
 score of a candidate is the sum of the scores it receives from all the votes. 
 The winner is the candidate with maximum score. 

Without loss of generality, we assume that for any score vector $\vec{\alpha}$,
there exists a $j$ such that $\alpha_j =1$ and $\alpha_k = 0$ for all $k>j$.
 The vector $\alpha$ that is $1$ in the first $k$ coordinates and $0$
 otherwise gives the {\em $k$-approval}  voting rule. $1$-approval is
 called the {\em plurality} voting rule, and $(m-1)$-approval is
 called the {\em veto} voting rule. The score vector $(m-1, m-2,
 \dots, 1, 0)$ gives the {\em Borda} voting rule.
 
\item \textbf{Approval:} In approval voting, each voter approves a subset
 of candidates. The winner is the candidate which is approved by the
 maximum number of voters. 
 
\item \textbf{Maximin:} The maximin score of a candidate $x$ is $\min_{y\ne
   x} D(x,y)$. The winner is the candidate with maximum maximin
 score. 
 
\item \textbf{Copeland$^\alpha$:} The Copeland$^\alpha$ score of a candidate $x$ is $|\{y\ne x:D_{\mathcal{E}}(x,y)>0\}|+\alpha|\{y\ne x:D_{\mathcal{E}}(x,y)=0\}|$, where $\alpha\in [0,1]$. The winner is  the candidate with the maximum Copeland score. 
 
\item \textbf{Bucklin:} A candidate $x$'s Bucklin score is the minimum number $l$ such that more than half 
 of the voters rank $x$ in their top $l$ positions. The winner is the candidate with lowest Bucklin score.
 
\item \textbf{Plurality with runoff:} The top two candidates according to
 plurality score are selected first. The pairwise winner of these two
 candidates is selected as the winner of the election. This rule is
 often called the {\em runoff} voting rule.
 
\item \textbf{Single Transferable Vote:} In Single Transferable Vote (STV), 
 a candidate with least plurality score is dropped out of the election and its votes 
 are transferred to the next preferred candidate. If two or more
 candidates receive least plurality score,  then tie breaking rule is
 used. The candidate that remains after $(m-1)$ rounds is the winner.
\end{itemize}
 
Among the above voting rules, only the maximin and the Copeland voting
rules are Condorcet consistent. 

Given an election, the margin of victory of this election is:

\begin{definition}
 Given a voting profile $\succ$, the {\em margin of victory (\textsf{MOV})} is the smallest number of votes $k$ such that the winner can be changed by changing $k$ many votes in $\succ$, while keeping other votes unchanged.
\end{definition}

Xia \citep{xia2012computing} showed that for most common voting rules
(including all those mentioned above), when each voter votes
i.i.d. according to a distribution on the candidates, the margin of
victory is with high probability, either $\Theta(\sqrt{n})$ or
$\Theta(n)$.  

\subsection{Statistical Distance Measures}

Given a finite set $X$, a distribution $\mu$ on $X$ is defined as a function $\mu : X \longrightarrow [0,1]$, such that $\sum_{x\in X} \mu(x) = 1$. 
The finite set $X$ is called the base set of the distribution
$\mu$. We use the following distance measures among distributions in
our work.

\begin{definition}
 The {\em KL divergence}~\citep{kullback1951information} and the {\em Jensen-Shannon divergence}~\citep{lin1991divergence} between two distributions $\mu_1$ and $\mu_2$ on $X$ are defined as follows.
 \[ D_{KL}(\mu_1 || \mu_2) = \sum_{x\in X} \mu_1(x) \log \frac{\mu_1(x)}{\mu_2(x)} \]
 \[ JS(\mu_1, \mu_2) = \frac{1}{2} \left( D_{KL}\left(\mu_1 || \frac{\mu_1 + \mu_2}{2} \right) + D_{KL}\left(\mu_2 || \frac{\mu_1 + \mu_2}{2} \right) \right) \]
\end{definition}

The Jensen-Shannon divergence has subsequently been generalized to measure the mutual distance among more than two distributions as follows.

\begin{definition}
 Given $n$ distributions $\mu_1, \ldots, \mu_n$ over the same base
 set, the {\em generalized Jensen-Shannon divergence}\footnote{The
   generalized Jensen-Shannon divergence is often formulated with
   weights on each of the $n$ distributions. The definition here puts equal
 weight on each distribution and is sufficient for our purposes.}
among them is:
 \[ JS (\mu_1, \ldots, \mu_n) = \frac{1}{n} \sum_{i=1}^n D_{KL}\left(\mu_i || \frac{1}{n}\sum_{j=1}^n \mu_j\right) \]
\end{definition}

\subsection{Chernoff Bound}

We repeatedly use the following concentration inequality:

\begin{theorem}\label{thm:chernoff}
Let $X_1, \dots, X_\ell$ be a sequence of $\ell$ independent
random variables in $[0,1]$ (not necessarily identical). Let $S = \sum_i X_i$ and
let $\mu = \E{S}$. Then, for any $0 \leq \delta \leq 1$: 
$$\Pr[|S-\mu| \geq \delta \ell] < 2 \exp(-2\ell \delta^2)$$
and 
$$\Pr[|S - \mu| \geq \delta \mu] < 2\exp(-\delta^2\mu/3)$$
The first inequality is called an additive bound and the second
multiplicative. 
\end{theorem}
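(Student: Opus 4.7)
The plan is to apply the standard Chernoff/Hoeffding exponential-moment method: for any $t>0$, Markov's inequality on the nonnegative variable $e^{tS}$ gives
\[\Pr[S - \mu \ge a] \;=\; \Pr\!\bigl[e^{t(S-\mu)} \ge e^{ta}\bigr] \;\le\; e^{-ta}\,\mathbb{E}\!\left[e^{t(S-\mu)}\right],\]
and by independence the moment generating function factorizes as $\mathbb{E}[e^{t(S-\mu)}] = \prod_i \mathbb{E}[e^{t(X_i - \mathbb{E} X_i)}]$. The whole proof then reduces to bounding each single-variable MGF in two different ways, depending on which tail estimate we are after, and then optimizing $t$.

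For the additive bound, I would invoke Hoeffding's lemma: if $Y \in [a,b]$ with $\mathbb{E} Y = 0$, then $\mathbb{E}[e^{tY}] \le e^{t^2(b-a)^2/8}$. Since each $X_i \in [0,1]$, applying this with $Y = X_i - \mathbb{E} X_i$ (so $b-a \le 1$) yields $\mathbb{E}[e^{t(S-\mu)}] \le e^{t^2 \ell/8}$. Plugging into the Markov step with $a = \delta \ell$ gives $\Pr[S-\mu \ge \delta \ell] \le \exp(-t\delta\ell + t^2\ell/8)$, and the choice $t = 4\delta$ produces $\exp(-2\ell \delta^2)$. A symmetric argument with $t<0$ (or applied to $-X_i$) handles the lower tail, and a union bound contributes the factor of $2$.

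For the multiplicative bound, the right MGF estimate to use is the convexity bound $e^{tx} \le 1 + (e^t-1)x$ valid for $x \in [0,1]$ and $t \ge 0$, which upon taking expectations and using $1+y \le e^y$ gives $\mathbb{E}[e^{tX_i}] \le \exp((e^t-1)\mathbb{E} X_i)$, and hence $\mathbb{E}[e^{tS}] \le \exp((e^t-1)\mu)$. Therefore
\[\Pr[S \ge (1+\delta)\mu] \;\le\; \exp\bigl((e^t-1)\mu - t(1+\delta)\mu\bigr),\]
and the optimal choice $t = \ln(1+\delta)$ produces the familiar bound $\bigl(e^{\delta}/(1+\delta)^{1+\delta}\bigr)^\mu$. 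A short calculus check (expanding $(1+\delta)\ln(1+\delta) - \delta$ via its Taylor series and using $0 \le \delta \le 1$) shows this is at most $\exp(-\delta^2 \mu/3)$. The lower tail $\Pr[S \le (1-\delta)\mu]$ is handled analogously using $t<0$ and yields the same $\exp(-\delta^2 \mu/3)$ bound (in fact a slightly better constant), and once again a union bound supplies the factor of $2$.

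The only mildly delicate step is the final scalar inequality $(1+\delta)\ln(1+\delta) - \delta \ge \delta^2/3$ on $[0,1]$ (and its lower-tail analogue), which is where the particular constant $3$ comes from; everything else is pure bookkeeping around the exponential-moment method. Since the statement is entirely classical, I would in practice simply cite a standard reference (e.g., Dubhashi--Panconesi or Motwani--Raghavan) rather than reproduce the optimization in full.
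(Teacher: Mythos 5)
Your proposal is correct, and it is the standard exponential-moment (Hoeffding/Chernoff) derivation; the paper itself offers no proof of this theorem, stating it as a classical concentration inequality to be used as a black box, so citing a standard reference as you suggest is exactly in the spirit of the paper. The only cosmetic mismatch is that the paper writes strict inequalities where the usual statements give $\leq$, which is immaterial for every application in the paper.
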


\ignore{

\subsection{Problem Definition}

The winner determination problem is defined as follows.

\begin{definition}(Winner determination)\\
 Given a voting rule $r$ and a set of votes $V$ over a set of
 candidates $C$, determine the winner, $r(\succ)$.
\end{definition}

We recall again the definition of our main problem.

In an election, a voter is called pivotal if  the winner can be changed by changing her vote only. In a voting rule where for every voter, there exists an election instance where she is pivotal, any deterministic winner determination algorithm needs to see all the $n$ votes in the worst case. To overcome this fundamental bottleneck, we look for algorithms that can determine winner correctly with probability of making an error being at most some parameter called $\delta$. Towards that end, we define the $(\epsilon, \delta)$-winner determination problem as follows for $\delta\in (0,1)$. The margin of victory of an election plays a crucial role in determining the sample complexity of the winner prediction algorithm for all common voting rules (see \Cref{thm:lb,cor:lb}). This motivates us to include the margin of victory in the definition of the winner prediction problem itself.

\begin{definition}($(\epsilon, \delta)$-winner determination)\\
 Given a voting rule and a set of votes $V$ over a set of candidates $C$ such that the margin of victory is at least $\epsilon n$, 
 determine the winner of the election with probability at least $1-\delta$. The probability is taken over the internal coin tosses of the algorithm.
\end{definition}
Notice that, the probability is taken over the runs of the algorithm whiling keeping the input unchanged. Specifically, 
the algorithm must be able to output the winner with error probability bounded by $\delta$ \emph{for all input instances}.
}

\section{Results on Lower Bounds}\label{sec:lwb}

Our lower bounds for the sample complexity of $(\eps, \delta)$-winner
determination are derived from information-theoretic lower bounds
for distinguishing distributions.

We start from the following basic observation.
Let $X$ be a random variable taking value $1$ with probability
$\frac{1}{2}-\epsilon$ and $0$ with probability
$\frac{1}{2}+\epsilon$; $Y$ be a random variable taking value $1$ with
probability $\frac{1}{2}+\epsilon$ and $0$ with probability
$\frac{1}{2}-\epsilon$.  Then, it is well-known that every algorithm needs
$\Omega(\frac{1}{\epsilon^2}\log \frac{1}{\delta})$ many samples to
distinguish between $X$ and $Y$ with probability of making an error
being at most $\delta$~\citep{canetti1995lower,bar2001sampling}. Immediately, we have:

\begin{theorem}\label{thm:lb}
 The sample complexity of the $(\epsilon, \delta)$-winner determination problem for the plurality voting rule is $\Omega(\frac{1}{\epsilon^2}\log \frac{1}{\delta})$.
\end{theorem}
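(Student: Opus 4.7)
The plan is to reduce the standard coin-distinguishing lower bound cited in the preceding paragraph (the one for distinguishing the random variables $X$ and $Y$) to the $(\epsilon,\delta)$-winner determination problem for plurality. Concretely, I would construct two plurality instances on candidates $\{a,b\}$ with $n$ voters each: in $E_1$, a set of $(\tfrac{1}{2}+\epsilon)n$ voters prefer $a$ to $b$ and the remaining $(\tfrac{1}{2}-\epsilon)n$ prefer $b$ to $a$, while $E_2$ is the same with $a$ and $b$ swapped. The winner of $E_1$ is $a$ and of $E_2$ is $b$; a single vote-flip changes the plurality margin by two, so the \textsf{MOV} is exactly $\lceil\epsilon n\rceil$ in both instances, satisfying the promise of the $(\epsilon,\delta)$-winner determination problem.

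Next I would argue that any algorithm $\mathcal{A}$ that solves the $(\epsilon,\delta)$-winner determination problem with $s$ samples must, in particular, distinguish $E_1$ from $E_2$ with error at most $\delta$, because the correct outputs differ. To convert this into the coin-distinguishing set-up, I would apply Yao's minimax principle and consider the input distributions $\mathcal{D}_1,\mathcal{D}_2$ obtained by choosing, uniformly at random, which $(\tfrac{1}{2}\pm\epsilon)n$-subset of voters hold each preference. Under $\mathcal{D}_i$, the voter identities are exchangeable, so without loss of generality a deterministic algorithm queries voters $v_1,\dots,v_s$ in order, and the resulting transcript is a sequence of $s$ samples drawn without replacement from a population that is a Bernoulli-$(\tfrac{1}{2}+\epsilon)$ population in $\mathcal{D}_1$ and a Bernoulli-$(\tfrac{1}{2}-\epsilon)$ population in $\mathcal{D}_2$.

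Finally, I would remove the ``without replacement'' technicality by noting that the lower bound must hold for every sufficiently large $n$: for any target sample count $s$, taking $n \gg s/\epsilon$ makes the total variation distance between the without-replacement view and the corresponding i.i.d.\ Bernoulli view $o(1)$. Hence an $s$-sample algorithm for the $(\epsilon,\delta)$-winner determination problem on plurality yields an $s$-sample tester that distinguishes the Bernoulli-$(\tfrac{1}{2}+\epsilon)$ coin $X$ from the Bernoulli-$(\tfrac{1}{2}-\epsilon)$ coin $Y$ with error at most $\delta+o(1)$. Invoking the cited bound of~\citet{canetti1995lower,bar2001sampling} immediately gives $s=\Omega(\tfrac{1}{\epsilon^2}\log\tfrac{1}{\delta})$.

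The only mildly delicate step is the last one — justifying the passage from sampling with replacement (which is what the cited lower bound assumes) to the ``sampling a fixed election'' model of our problem. I expect this to be a one-line observation via the coupling above rather than a real obstacle, since our bound is allowed to be a function of $\epsilon$ and $\delta$ alone and we are free to scale $n$. Everything else is essentially bookkeeping on top of the observation already made before the theorem.
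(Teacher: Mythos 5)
Your proposal is correct and follows essentially the same route as the paper: construct two two-candidate plurality elections with vote shares $\frac{1}{2}\pm\epsilon$, observe that each has margin of victory $\epsilon n$, and reduce the coin-distinguishing lower bound of \citet{canetti1995lower} to $(\epsilon,\delta)$-winner determination. The only difference is that you spell out the passage from querying a fixed population (sampling without replacement) to i.i.d.\ Bernoulli samples via exchangeability and a coupling for large $n$, a technicality the paper's proof silently elides by identifying the election with the distribution $\mu_V$.
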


\begin{proof}
 Consider an election with two candidates $a$ and $b$. Consider two
 vote distributions $X$ and $Y$. In $X$, exactly $\frac{1}{2} +
 \epsilon$ fraction of voters prefer $a$ to $b$ and thus $a$ is the
 plurality winner of the election. In $Y$, exactly $\frac{1}{2} +
 \epsilon$ fraction of voters prefer $b$ to $a$ and thus $b$ is the
 plurality winner of the election. Also, the margin of victory of both
 the elections corresponding to the vote distributions $X$ and $Y$ is
 $\epsilon n$, since each vote change can change the plurality score
 of any candidate by at most one. Any $(\epsilon, \delta)$-winner
 determination algorithm for plurality will give us a distinguisher between the
 distributions $X$ and $Y$ with probability of error at most $\delta$
 and hence will need $\Omega(\frac{1}{\epsilon^2}\log
 \frac{1}{\delta})$ samples. \qed
\end{proof}

\Cref{thm:lb} immediately gives us the following corollary.

\begin{corollary}\label{cor:lb}
 Every $(\epsilon, \delta)$-winner determination algorithm needs
 $\Omega(\frac{1}{\epsilon^2}\log \frac{1}{\delta})$ many samples for
 any voting rule which reduces to the plurality rule for two
 candidates. In particular, the lower bound holds for approval,
 scoring rules, maximin, Copeland,  Bucklin, plurality with runoff,
 and STV voting rules.  
\end{corollary}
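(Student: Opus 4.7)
The plan is to observe that the lower bound instance constructed in the proof of Theorem \ref{thm:lb} uses only two candidates, and that for any voting rule $r$ whose behavior on two-candidate elections coincides with plurality, the very same pair of distributions $X, Y$ serves as a hard instance for $r$. Concretely, I would take $X$ to be the distribution in which a $(\tfrac{1}{2}+\epsilon)$-fraction of voters place $a$ above $b$ and $Y$ the mirror image; under any rule $r$ that reduces to plurality on two candidates, the winner under $X$ is $a$ and under $Y$ is $b$, while the margin of victory remains $\epsilon n$ because on two candidates the outcome is determined solely by the pairwise count and each single-vote change shifts this count by one. Any $(\epsilon, \delta)$-winner determination algorithm for $r$ would then distinguish $X$ from $Y$ with error at most $\delta$, forcing $\Omega(\tfrac{1}{\epsilon^2}\log \tfrac{1}{\delta})$ queries by the Canetti--Even--Goldreich distinguishability lower bound already invoked in Theorem \ref{thm:lb}.

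The remaining work is a routine case-by-case verification that each listed rule selects, on any two-candidate profile, the candidate ranked first by the majority. For positional scoring rules this is immediate once the two-dimensional score vector is normalized to $(1,0)$; for maximin and Copeland$^\alpha$ the unique pairwise comparison $D(a,b)$ directly determines the winner; for Bucklin the minimum $l$ such that more than half of the voters rank $x$ in the top $l$ equals $1$ precisely for the majority-preferred candidate; and for plurality with runoff and STV the two-candidate case collapses to plurality by definition. For approval voting, the input format is not a ranking, so I would instead describe the hard distributions directly in approval format by having every voter in $X$ approve the singleton $\{a\}$ and every voter in $Y$ approve $\{b\}$; on such profiles the approval winner coincides with the plurality winner and the same argument goes through.

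I do not anticipate any real obstacle: the corollary is essentially a restatement of Theorem \ref{thm:lb} under the observation that every named voting rule degenerates to plurality in the two-candidate regime. The only mild subtlety is approval, where the input format differs from a ranking, but substituting singleton approval sets sidesteps this issue without changing the margin of victory or the distinguishing problem.
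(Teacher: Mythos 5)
Your proposal is correct and follows essentially the same route as the paper: reduce to Theorem~\ref{thm:lb} by noting that every listed rule coincides with plurality on two-candidate profiles (so the same hard pair of distributions and the same margin-of-victory calculation carry over), and handle approval by viewing each plurality vote as a singleton approval set, which is exactly the paper's embedding of plurality elections into approval elections. Your explicit case-by-case check of the two-candidate reduction is just a more detailed spelling-out of what the paper asserts in one line.
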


\begin{proof}
All the voting rules mentioned in the statement except the approval voting rule is same as the plurality voting rule for elections with two candidates. 
Hence, the result follows immediately from \Cref{thm:lb} for the above voting rules except the approval voting rule. 
The result for the approval voting rule follows from the fact that any arbitrary plurality election is also a valid approval election where 
every voter approves exactly one candidate.\qed
\end{proof}

 We derive stronger lower bounds in terms of $m$ by explicitly viewing
 the $(\eps,\delta)$-winner determination problem as a {\em statistical
   classification} problem. In this problem, we are given a black box
 that contains a distribution $\mu$ which is guaranteed to be one of
 $\ell$ known distributions $\mu_1, \dots, \mu_\ell$. A {\em
   classifier} is a randomized oracle which has to determine the
 identity of $\mu$, where each oracle call produces a sample from
 $\mu$. At the end of its execution, the classifier announces a guess
 for the identity of $\mu$, which has to be correct with probability
 at least $1-\delta$. Using information-theoretic methods, Bar-Yossef
 \citep{bar2003sampling} showed the following:
 
\begin{lemma}\label{lem:kldiv}
 The worst case sample complexity $q$ of a classifier $C$ for $\mu_1,
 \ldots, \mu_\ell$ which does not make error with probability more
 than $\delta$ satisfies  following.
 \[  q \ge \Omega\left( \frac{\log \ell}{JS\left( \mu_1, \ldots, \mu_\ell \right)} . \left( 1 - \delta \right) \right) \]
\end{lemma}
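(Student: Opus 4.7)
The plan is to use the classical information-theoretic template: encode the identity of the unknown distribution as a uniform random variable, apply Fano's inequality to a correct classifier, then bound the mutual information flowing into the classifier by the total divergence of the samples.

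Concretely, let $I$ be uniform on $\{1,\ldots,\ell\}$, so $H(I)=\log\ell$. Conditioned on $I=i$, let $S_1,\ldots,S_q$ be the $q$ i.i.d.\ draws from $\mu_i$ that $C$ sees (padding with extra draws if $C$ wants to stop early; this can only weaken the bound on $q$). Let $\hat I=C(S_1,\ldots,S_q;R)$ be the output, where $R$ is the classifier's internal randomness, independent of $I$. The error hypothesis $\Pr[\hat I\neq I\mid I=i]\le\delta$ carries over to the uniform prior, so Fano's inequality gives $H(I\mid\hat I)\le h_2(\delta)+\delta\log(\ell-1)\le 1+\delta\log\ell$, and hence $I(I;\hat I)\ge(1-\delta)\log\ell-1$.

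Next I would invoke the data-processing inequality along the Markov chain $I\to(S_1,\ldots,S_q,R)\to\hat I$, using that $R\perp I$, to conclude $I(I;\hat I)\le I(I;S_1,\ldots,S_q)$. Finally, I would bound the right-hand side by $q\cdot JS(\mu_1,\ldots,\mu_\ell)$. This uses two facts: the $S_t$ are conditionally independent given $I$, and each $S_t$ is marginally distributed as $\bar\mu:=\frac{1}{\ell}\sum_j\mu_j$. Writing
\[
I(I;S_1,\ldots,S_q)=H(S_1,\ldots,S_q)-\sum_{t=1}^q H(S_t\mid I)\le\sum_{t=1}^q\bigl(H(S_t)-H(S_t\mid I)\bigr)=q\cdot I(I;S_1),
\]
and identifying $I(I;S_1)=\frac{1}{\ell}\sum_i D_{KL}(\mu_i\|\bar\mu)=JS(\mu_1,\ldots,\mu_\ell)$ directly from the definition, one gets $q\cdot JS(\mu_1,\ldots,\mu_\ell)\ge(1-\delta)\log\ell-1$, which rearranges to the claimed lower bound (the additive $-1$ is absorbed into the $\Omega(\cdot)$ since the lemma is only meaningful when $\log\ell$ is larger than a constant).

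The only genuinely delicate step is the identification of the single-sample mutual information with the generalized Jensen--Shannon divergence, which is a short calculation using $H(S_1)-H(S_1\mid I)=\sum_i\frac{1}{\ell}\sum_x\mu_i(x)\log\frac{\mu_i(x)}{\bar\mu(x)}$. Everything else is an application of standard inequalities (Fano, data processing, subadditivity of entropy), and the randomness/adaptivity of $C$ is handled by the independence of $R$ from $I$ and by the padding argument for variable stopping.
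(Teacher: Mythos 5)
Your proof is correct. The paper does not actually prove this lemma---it is imported directly from Bar-Yossef's work on sampling lower bounds, cited just before the statement---and your argument (Fano's inequality applied to a uniform index $I$, data processing through $(S_1,\ldots,S_q,R)$ with $R\perp I$, tensorization of mutual information via conditional independence of the samples given $I$, and the identity $I(I;S_1)=JS(\mu_1,\ldots,\mu_\ell)$) is precisely the standard information-theoretic route behind that citation, so there is nothing in the paper to contrast it with. The only blemish is the additive loss from Fano: your inequality $q\cdot JS(\mu_1,\ldots,\mu_\ell)\ge(1-\delta)\log\ell-1$ is vacuous when $(1-\delta)\log\ell\le 1$, but since the lemma is stated asymptotically and is applied in the paper with $\ell$ growing (e.g.\ $\ell=m$ or $k+1$) and $\delta$ bounded away from $1$, absorbing that constant into the $\Omega(\cdot)$ is legitimate.
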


The connection with our problem is the following.   A
set $V$ of $n$ votes on a candidate set $\mathcal{C}$ generates a
probability distribution $\mu_V$ on $\mathcal{L}(\mathcal{C})$, where 
$\mu_V(\succ)$ is proportional to the number of voters who voted
$\succ$. Querying a random vote from $V$ is then equivalent to
sampling from the distribution $\mu_V$. The margin of victory is proportional to the minimum statistical distance between $\mu_V$ and $\mu_W$,
over all the voting profiles $W$ having a different winner than the
winner of $V$. 

Now, suppose we have $m$ voting profiles $V_1, \dots, V_m$ having
different winners such that each $V_i$ has margin of 
victory at least $\eps n$. Any $(\eps,\delta)$-winner determination
algorithm must also be a statistical classifier for $\mu_{V_1}, \dots,
\mu_{V_m}$ in the above sense. It then remains to construct such
voting profiles for various voting rules which we do in the proof of
the following theorem:

\begin{theorem}\label{thm:strlwb}
 Every $(\epsilon, \delta)$-winner determination algorithm needs $\Omega \left( \frac{\log m}{\epsilon^2}.\left( 1 - \delta \right) \right)$ many samples for approval, Borda, Bucklin, and any Condorcet consistent voting rules, and $\Omega \left( \frac{\log k}{\epsilon^2}.\left( 1 - \delta \right) \right)$ many samples for the $k$-approval voting rule.
\end{theorem}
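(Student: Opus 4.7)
The plan is to apply Lemma \ref{lem:kldiv} separately for each voting rule. For a given rule, I want to exhibit $m$ (or $k$) voting profiles $V_1,\dots,V_m$ such that: (i) each $V_i$ has a distinct winner under the rule, (ii) each $V_i$ has margin of victory at least $\epsilon n$, and (iii) the generalized Jensen--Shannon divergence $JS(\mu_{V_1},\dots,\mu_{V_m})$ is $O(\epsilon^2)$. Given such a construction, any $(\epsilon,\delta)$-winner determination algorithm must act as a statistical classifier on the induced vote distributions $\mu_{V_i}$, so Lemma \ref{lem:kldiv} delivers the claimed $\Omega((\log m/\epsilon^2)(1-\delta))$ lower bound directly.

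The uniform tool for step (iii) is the following elementary estimate. Fix a single ``base'' distribution $\mu_0$ on $\mathcal{L}(\mathcal{C})$ and write each $\mu_i$ in the form $\mu_i=(1-c\epsilon)\mu_0+c\epsilon\nu_i$ for probability distributions $\nu_i$ and some constant $c$. Then $\bar\mu=\frac{1}{m}\sum_j\mu_j$ differs from $\mu_0$ only on an $O(\epsilon)$ fraction of mass, and a direct calculation with $\log(1+x)\le x$ gives $D_{KL}(\mu_i\Vert\bar\mu)=O(\epsilon^2)$ for every $i$, so $JS(\mu_1,\dots,\mu_m)=O(\epsilon^2)$. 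Thus it suffices to build the $V_i$ as small $\epsilon$-perturbations of a common ``balanced'' profile, designed so that the perturbation tips the election toward a designated candidate with MOV at least $\epsilon n$.

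Constructing such profiles is the part that must be done rule by rule. For any Condorcet consistent rule, I would start from a profile whose weighted majority graph is a balanced cycle (for instance, a ``rotational'' profile in which the pairwise tournament is near-tied), then, for each candidate $c_i$, shift an $\epsilon$ fraction of voters onto a ranking that pushes $c_i$ ahead of every other candidate in pairwise contests; this makes $c_i$ the Condorcet winner and certifies MOV $\ge\epsilon n$ because undoing those shifts restores the tied tournament. For Borda, I would start from the ``neutral'' profile built by placing the $m!$ rankings in equal proportion and then inject an $\epsilon$ fraction of copies of the ranking $c_i\succ c_{i+1}\succ\cdots$ to bump candidate $c_i$'s Borda score above the rest by $\Theta(\epsilon n)$. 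For approval, each $V_i$ is obtained by having an extra $\epsilon$ fraction of voters approve only $c_i$ on top of a symmetric baseline. For Bucklin, a similar $\epsilon$-perturbation of a symmetric baseline suffices to put $c_i$'s Bucklin count strictly above the majority threshold ahead of everyone else. For $k$-approval with $k<m$, I can only maneuver among candidates that can be simultaneously top-$k$; here I restrict the construction to $k$ candidates and obtain the weaker $\Omega(\log k/\epsilon^2)$ bound in the same way.

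The main obstacle is precisely the profile design in the last paragraph: I must simultaneously ensure that the perturbation is small enough (so that the JS bound $O(\epsilon^2)$ goes through for all $m$ distributions against their common mean) and large enough (so that the MOV is provably $\ge\epsilon n$ under the rule in question). Verifying the MOV requires a rule-specific argument --- e.g., for maximin/Copeland it reduces to checking that reverting fewer than $\epsilon n$ votes cannot flip the sign of the pairwise margin $D_E(c_i,c_j)$ for all $j$, while for Borda it reduces to checking that fewer than $\epsilon n$ vote changes cannot close a score gap of $\Theta(\epsilon n)$. Once these MOV checks are in place, applying Lemma \ref{lem:kldiv} with $\ell=m$ (respectively $\ell=k$) yields the theorem.
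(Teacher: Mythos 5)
Your overall skeleton is the same as the paper's: cast the problem as statistical classification, build $m$ (resp.\ roughly $k$) profiles with pairwise distinct winners and margin of victory $\Omega(\epsilon n)$, bound the divergence to the mean by $O(\epsilon^2)$, and invoke \Cref{lem:kldiv}. The genuine gap is your ``uniform tool''. The claim that every mixture $\mu_i=(1-c\epsilon)\mu_0+c\epsilon\nu_i$ satisfies $D_{KL}(\mu_i\Vert\bar\mu)=O(\epsilon^2)$ is false as stated: the quadratic bound needs the likelihood ratio $\mu_i(x)/\bar\mu(x)$ to be $1+O(\epsilon)$ \emph{pointwise} (then the linear terms cancel because $\sum_x(\mu_i(x)-\bar\mu(x))=0$), and that in turn requires the perturbation $\nu_i$ to be dominated, up to a constant factor, by the base measure $\mu_0$, i.e.\ to be spread over a constant fraction of the ballots. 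Your concrete constructions violate exactly this: an extra $\epsilon$ fraction of voters approving only $c_i$, or $\epsilon n$ copies of the single ranking $c_i\succ c_{i+1}\succ\cdots$, or an $\epsilon$ fraction moved onto ``a ranking'' favoring $c_i$, concentrate $\nu_i$ on ballots where $\mu_0$ and the other $\nu_j$ put negligible mass; there $\mu_i/\bar\mu\approx m$, so $D_{KL}(\mu_i\Vert\bar\mu)=\Theta(\epsilon\log m)$, and \Cref{lem:kldiv} then yields only $\Omega\bigl(\frac{1-\delta}{\epsilon}\bigr)$ rather than $\Omega\bigl(\frac{\log m}{\epsilon^2}(1-\delta)\bigr)$. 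This is precisely why the paper spreads the extra $\epsilon$ mass over a constant fraction of the support: over all rankings placing $c_i$ in the top $m/2$ positions for Borda and Condorcet consistent rules, over all $k$-subsets (of a fixed set of $k+1$ candidates) containing $c_i$ for $k$-approval, and over all $\tfrac m4$-sized top segments containing $c_i$ for Bucklin, with approval then inherited from the $\tfrac m2$-approval construction via \Cref{lem:approval}.

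Two further concrete problems. For Borda, your margin-of-victory check is wrong: a score gap of $\Theta(\epsilon n)$ does \emph{not} certify $\textsf{MOV}\ge\epsilon n$, because a single vote change can shift the difference of two Borda scores by $\Theta(m)$; your single-ranking injection creates a gap of only $\epsilon n$ between $c_i$ and $c_{i+1}$, hence $\textsf{MOV}=\Theta(\epsilon n/m)$. The paper's spread-out perturbation creates a gap of $\Theta(\epsilon m n)$, which is what makes the margin of victory $\Omega(\epsilon n)$ (and the same spreading simultaneously fixes the KL bound). Finally, for $k$-approval, ``restricting the construction to $k$ candidates'' is degenerate --- with only $k$ candidates there is a single possible $k$-ballot, so all profiles coincide; the paper uses $k+1$ candidates, giving $k+1$ distinct ballots and $k+1$ distinct winners, which is where the $\Omega(\log k)$ bound comes from.
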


\begin{proof}
 For each voting rules mentioned in the theorem, we will show $d$ ($d=k+1$ for the $k$-approval voting rule and $d=m$ for the rest of the voting rules) many distributions $\mu_1, \ldots, \mu_d$ on the votes with the following properties the result follows from \Cref{lem:kldiv}. Let $V_i$ be an election where each vote $v\in \mathcal{L(C)}$ occurs exactly $\mu_i(v)\cdot n$ many times. Let $\mu = \frac{1}{d}\sum_{i=1}^d \mu_i$.
 \begin{enumerate}
  \item For every $i \ne j$, the winner in $V_i$ is different from the winner in $V_j$.
  \item For every $i$, the margin of victory of $V_i$ is $\Omega(\epsilon n)$.
  \item $D_{KL}(\mu_i || \mu) = O(\epsilon^2)$
 \end{enumerate}
 The distributions for different voting rules are as follows. Let the candidate set be $\mathcal{C} = \{ c_1, \ldots, c_m \}$. 
 
 \textbf{$k$-approval voting rule.} Fix any arbitrary $M:= k+1$ many candidates $c_1, \ldots, c_M$. For $i \in [M]$, we define a distribution $\mu_i$ on all $k$ sized subsets of $\mathcal{C}$ (for the $k$-approval voting rule, each vote is a $k$-sized subset of $\mathcal{C}$) as follows. Each $k$ sized subset corresponds to top $k$ candidates in a vote.
 $$ \mu_i(x) = \begin{cases}
                \frac{\epsilon}{{M-1 \choose k-1}} + \frac{1-\epsilon}{{M \choose k}}& \text{if } c_i\in x \text{ and } x\subseteq \{c_1, \ldots, c_{M}\}\\
                \frac{1-\epsilon}{{M \choose k}}& c_i\notin x \text{ and } x\subseteq \{c_1, \ldots, c_{M}\}\\
                0& \text{else}
               \end{cases}
 $$
 The score of $c_i$ in $V_i$ is $n\left( \eps + \left( 1 - \eps \right)\frac{{M-1 \choose k-1}}{{M \choose k}} \right)$, the
 score of any other candidate $c_j \in \{c_1, \ldots, c_M\}\setminus\{c_i\}$ is $n\left( 1 - \eps \right)\frac{{M-1 \choose k-1}}{{M \choose k}}$, and the score of the rest of the candidates is zero. Hence,
 the margin of victory is $\Omega(\epsilon n)$, since each
 vote change can reduce the score of $c_i$ by at most one and increase
 the score of any other candidate by at most one. This proves the
 result for the $k$-approval voting rule. Now, we show that $D_{KL}(\mu_i || \mu)$ to be $O(\epsilon^2)$. 
 \begin{eqnarray*}
  D_{KL}(\mu_i || \mu) 
  &=& \left( \eps+\left(1-\eps\right)\frac{k}{M} \right)\log\left( 1-\eps+\eps\frac{M}{k} \right) + \left( 1-\eps \right)\left( 1-\frac{k}{M} \right)\log\left(1-\eps\right)\\
  &\le& \left( \eps+\left(1-\eps\right)\frac{k}{M} \right)\left( \eps\frac{M}{k} - \eps \right) - \left( 1-\eps \right)\left( 1-\frac{k}{M} \right)\eps\\
  &=& \eps^2 \left( \frac{M}{k}-1 \right)\\
  &\le& 2\eps^2
 \end{eqnarray*}
 
 \textbf{Approval voting rule.} The result follows from the fact that every $\frac{m}{2}$-approval election is also a valid approval election and \Cref{lem:approval}.
 
 \textbf{Borda, any Condorcet consistent voting rule.} The score vector for the Borda voting rule which we use in this proof is $(m, m-1, \ldots, 1)$. For $i \in [m]$, we define a distribution $\mu_i$ on all possible linear orders over $\mathcal{C}$ as follows.
 $$ \mu_i(x) = \begin{cases}
                \frac{2\epsilon}{m!} + \frac{1-\epsilon}{m!}& \text{if } c_i \text{ is within top } \frac{m}{2} \text{ positions in }x.\\
                \frac{1-\epsilon}{m!}& \text{else}
               \end{cases}
 $$
 The score of $c_i$ in $V_i$ is $\frac{mn}{2}(1+\frac{\epsilon}{2})$ whereas the score of any other candidate $c_j \ne c_i$ is $\frac{mn}{2}$. Hence, the margin of victory is at least $\frac{\epsilon n}{8}$, since each vote change can reduce the score of $c_i$ by at most $m$ and increase the score of any other candidate by at most $m$. Also, in the weighted majority graph for the election $V_i$, $w(c_i, c_j) = \frac{\epsilon n}{2}$. Hence, the margin of victory is at least $\frac{\epsilon n}{4}$, since each vote change can change the weight of any edge in the weighted majority graph by at most two. Now, we show that $D_{KL}(\mu_i || \mu)$ to be $O(\epsilon^2)$.
 \begin{eqnarray*}
  D_{KL}(\mu_i || \mu) 
  &=& \frac{1+\eps}{2}\log\left( 1+\eps \right) + \frac{1-\eps}{2}\log\left( 1-\eps \right)\\
  &\le& \frac{1+\eps}{2}\eps - \frac{1-\eps}{2}\eps\\
  &=& \eps^2
 \end{eqnarray*}
 
 \textbf{Bucklin.} For $i \in [m]$, we define a distribution $\mu_i$ on all $\frac{m}{4}$ sized subsets of $\mathcal{C}$ as follows. Each $\frac{m}{4}$ sized subset corresponds to the top $\frac{m}{4}$ candidates in a vote.
 $$ \mu_i(x) = \begin{cases}
                \frac{1-\epsilon}{{m-1 \choose \frac{m}{4}-1}} + \frac{\epsilon}{{m \choose \frac{m}{4}}}& \text{if } c_i\in x\\
                \frac{\epsilon}{{m \choose \frac{m}{4}}}& \text{else}
               \end{cases}
 $$
 The candidate $c_i$ occurs within the top $\frac{m}{4}$ positions at least $n(1-\frac{3\epsilon}{4})$ many times, and any candidate $c_j \ne c_i$ occurs within the top $\frac{m}{4}$ positions at most $\frac{n}{3} - \frac{\epsilon n}{12}$ many times. Hence, the margin of victory is at least $\frac{\epsilon n}{6}$, since each vote change can change the number of time any particular candidate occurs within top $\frac{m}{4}$ positions by at most one. Now, we show that $D_{KL}(\mu_i || \mu)$ to be $O(\epsilon^2)$. 
 \begin{eqnarray*}
  D_{KL}(\mu_i || \mu) 
  &=& \left( 1-\frac{3\eps}{4} \right)\log\left(4-3\eps\right) + \frac{3\eps}{4}\log \eps\\
  &\le& \left( 1-\frac{3\eps}{4} \right)\log\left(4-3\eps\right)\\
  &=& 2\eps^2
 \end{eqnarray*}\qed
\end{proof}

\section{Results on Upper Bounds}\label{sec:upbd}

In this section, we present the upper bounds on the sample complexity
of the $(\epsilon, \delta)$-winner determination problem for various
voting rules. The general framework for proving the upper bounds is as
follows. For each voting rule, we first prove a useful structural
property about the election when the margin of victory is known to be
at least $\epsilon n$. Then, we sample a few votes uniformly
at random to estimate either the  score of the candidates for score
based voting rules or weights of the edges in the weighted majority
graph for other voting rules. Finally,  appealing to the structural
property that has been established, we argue that, the winner of the
election on the sampled votes will be the  same as the winner of the
election, if we are able to estimate either the scores of the candidates or
the weights of the edges in the weighted majority graph to a certain level of accuracy.  

Before getting into specific voting rules, we prove a straightforward
bound on the sample  complexity for the $(\epsilon, \delta)$-winner
determination problem for {\em any} voting rule.  

\begin{theorem}\label{thm:gen}
 There is a $(\epsilon, \delta)$-winner determination algorithm for every homogeneous voting rules with sample complexity $O(\frac{m!^2 \log \frac{m!}{\delta}}{\epsilon^2})$.
\end{theorem}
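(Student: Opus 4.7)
The plan is to exploit homogeneity of the voting rule to reduce winner determination to estimating the empirical distribution of vote types, then use a Chernoff bound together with a union bound over the $m!$ possible linear orders.

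\smallskip

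First, I would recast the voting rule on the simplex. Since $r$ is homogeneous, $r(\succ)$ depends only on the empirical distribution $\mu_V \in \Delta(\mathcal{L}(\mathcal{C}))$ induced by $V$. Thus $r$ descends to a function $f : \Delta(\mathcal{L}(\mathcal{C})) \to \mathcal{C}$ on (rational) distributions. Changing one vote in an $n$-voter election alters $\mu_V$ by $1/n$ in two coordinates, so it contributes $2/n$ to the $\ell_1$ distance (equivalently $1/n$ to the total variation distance). Hence a margin of victory of at least $\epsilon n$ translates into the following stability property: if $\hat\mu$ is any (rational) distribution over $\mathcal{L}(\mathcal{C})$ with $\|\mu_V - \hat\mu\|_1 < 2\epsilon$, then $f(\hat\mu) = f(\mu_V)$. (One first proves this for distributions realizable by $n$-voter integer counts, then extends to all rational distributions by scaling via homogeneity and then to all distributions in the $\ell_1$-ball by taking a finer common denominator.)

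\smallskip

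Next, I would describe the algorithm: draw $\ell = \Theta\!\bigl(\tfrac{(m!)^2}{\epsilon^2}\log\tfrac{m!}{\delta}\bigr)$ votes uniformly at random with replacement, form the empirical distribution $\hat\mu$ over $\mathcal{L}(\mathcal{C})$, and output $f(\hat\mu)$. For analysis, fix any $\sigma \in \mathcal{L}(\mathcal{C})$ and let $X_1,\dots,X_\ell$ be the indicators that the sampled vote equals $\sigma$. Each $X_i$ is Bernoulli with mean $\mu_V(\sigma)$, so by the additive Chernoff bound (\Cref{thm:chernoff}),
\[
\Pr\!\left[\,|\hat\mu(\sigma) - \mu_V(\sigma)| \geq \tfrac{\epsilon}{m!}\,\right] \;<\; 2\exp\!\Bigl(-2\ell\,\bigl(\tfrac{\epsilon}{m!}\bigr)^2\Bigr).
\]
Taking a union bound over the $m!$ orderings and setting $\ell = \bigl\lceil \tfrac{(m!)^2}{2\epsilon^2}\ln\tfrac{2\cdot m!}{\delta}\bigr\rceil$ makes the total failure probability at most $\delta$. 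On the good event, $\|\hat\mu - \mu_V\|_\infty < \epsilon/m!$, hence $\|\hat\mu - \mu_V\|_1 < \epsilon < 2\epsilon$, so by the stability property $f(\hat\mu) = f(\mu_V) = r(V)$, and the algorithm is correct.

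\smallskip

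The only delicate step is step one, namely making the passage from ``edit distance $< \epsilon n$ preserves the winner'' (which is literally what the margin-of-victory hypothesis gives) to ``$\ell_1$ distance $< 2\epsilon$ in distribution space preserves the winner.'' Everything else is a routine Chernoff-plus-union-bound calculation; homogeneity is precisely what allows this passage, because it lets us compare the $n$-voter profile to any rescaling of the empirical sample, and lets us refine the lattice of reachable distributions as needed. Once this is in place, the sample size $O\!\bigl(\tfrac{m!^2 \log(m!/\delta)}{\epsilon^2}\bigr)$ falls out immediately from the Chernoff calculation above.
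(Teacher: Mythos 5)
Your proposal is correct and follows essentially the same route as the paper: sample $\ell = O\bigl(\tfrac{m!^2}{\epsilon^2}\log\tfrac{m!}{\delta}\bigr)$ votes, estimate the frequency of each of the $m!$ linear orders to within roughly $\epsilon/m!$ via an additive Chernoff bound plus a union bound, and invoke homogeneity together with the margin-of-victory hypothesis to conclude the sampled election has the same winner. The only difference is presentational: you spell out (via the $\ell_1$/edit-distance translation and the rational-denominator rescaling) the stability step that the paper compresses into a single sentence.
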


\begin{proof}
 We sample $\ell$ votes uniformly at random from the set of votes with
 replacement. Let $X_i$ be an indicator random variable that is $1$
 exactly when $x$ is the $i$'th sample, and let $g(x)$ be the total
 number of voters whose vote is $x$. Define $\hat{g}(x) =
 \frac{n}{l}\sum_{i=1}^{l}X_i$.  Using the Chernoff bound (\Cref{thm:chernoff}), we have the following:
 $$ \Pr\left[ |\hat{g}(x) - g(x)| > \frac{\epsilon n}{2m!} \right] \le
 2\cdot {\exp\left(- \frac{\eps^2 \ell}{2m!^2}\right)}$$
 By using the union bound, we have the following,
 \begin{eqnarray*}
  \Pr\left[\exists x\in \mathcal{L(C)}, |\hat{g}(x) - g(x)| >
    \frac{\epsilon n}{2m!} \right] &\le& 2m!\cdot \exp\left(-\frac{\eps^2 \ell}{2m!^2}\right)
 \end{eqnarray*}
 Since the margin of victory is $\epsilon n$ and the voting rule is anonymous, the winner of the $\ell$
 sample votes will be same as the winner of the election if
 $|\hat{g}(x) - g(x)| \le  \frac{\epsilon n}{2m!}$ for every linear
 order $x\in \mathcal{L(C)}$. Hence, it is enough to take $\ell =
 O(m!^2/\eps^2 \cdot \log(m!/\delta))$. \qed
\end{proof}

\subsection{Approval Voting Rule}

We derive the upper bound on the sample complexity for the $(\epsilon, \delta)$-winner determination problem for the approval voting rule.

\begin{lemma}\label{lem:approval}
 If $ \textsf{MOV} \ge \epsilon n $ and $w$ be the winner of a approval election, then, 
 $ s(w) - s(x) \ge \epsilon n, $ for every candidate $ x \ne w $, where $s(y)$ is the number of approvals that a candidate $y$ receives.
\end{lemma}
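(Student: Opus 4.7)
The plan is to prove the contrapositive by contradiction: assume some candidate $x \ne w$ satisfies $s(w) - s(x) < \epsilon n$, and then exhibit a set of strictly fewer than $\epsilon n$ vote modifications that dethrone $w$, contradicting $\textsf{MOV} \ge \epsilon n$. Let $d = s(w) - s(x)$; since $w$ is the unique winner, $d \ge 1$.

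The first step is to show that there are enough ballots of a useful shape to edit. Let $A_w$ and $A_x$ be the sets of ballots approving $w$ and $x$ respectively. By the trivial inclusion bound $|A_w \setminus A_x| \ge |A_w| - |A_x| = d$, so at least $d$ ballots approve $w$ but not $x$. On each of some $k$ of these ballots I will perform the following single-vote edit: remove the approval of $w$ and add an approval of $x$, leaving the approvals of all other candidates intact. This is a legal approval ballot, and each such edit decreases $s(w)$ by exactly $1$ and increases $s(x)$ by exactly $1$, moving the score difference $s(w) - s(x)$ down by $2$ per edit.

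Choosing $k = \lceil (d+1)/2 \rceil$, after the modifications the new scores satisfy $s'(x) - s'(w) = 2k - d \ge 1$, so $w$ is no longer the approval winner, hence the election winner has changed. The number of vote changes used is $k = \lceil (d+1)/2 \rceil \le d < \epsilon n$, where the inequality $\lceil (d+1)/2 \rceil \le d$ uses $d \ge 1$. Therefore $\textsf{MOV} \le k < \epsilon n$, contradicting the hypothesis $\textsf{MOV} \ge \epsilon n$.

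I do not anticipate a serious obstacle: the only step that could fail is running out of suitable ballots to modify, and the inclusion-exclusion bound $|A_w \setminus A_x| \ge d$ rules that out immediately. The bound in the lemma ($d \ge \epsilon n$) is in fact slightly loose compared to what this argument gives (essentially $d \ge 2\epsilon n - 1$), but the weaker statement is what is needed downstream, so I will just state it as written.
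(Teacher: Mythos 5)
Your proof is correct and follows essentially the same route as the paper: assume $s(w)-s(x)<\epsilon n$ and exhibit fewer than $\epsilon n$ single-ballot modifications after which $w$ no longer wins, contradicting $\textsf{MOV}\ge\epsilon n$. The only (harmless) difference is that the paper simply adds approvals of $x$ to ballots that do not approve $x$, whereas your swap of an approval from $w$ to $x$ closes the gap twice as fast and yields a strict defeat rather than a tie, which is why you observe the lemma is slightly loose.
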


\begin{proof}
 Suppose there is a candidate $ x \ne w $ such that $ s(w) - s(x) < \epsilon n$. Then there must exist 
 $\epsilon n - 1$ votes which does not approve the candidate $x$. We modify these votes to make it 
 approve $x$. This makes $w$ not the unique winner in the modified election. This contradicts the fact that the \textsf{MOV} 
 is at least $\epsilon n$.\qed
\end{proof}

\begin{theorem}\label{thm:app}
 There is a $(\epsilon, \delta)$-winner determination algorithm for the approval voting rule with sample complexity $O(\frac{\log({m}/{\delta})}{\epsilon^2})$.
\end{theorem}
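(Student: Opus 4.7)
The plan is to sample $\ell = O(\log(m/\delta)/\epsilon^2)$ votes uniformly at random (with replacement), output the candidate with the maximum number of approvals among the sampled votes, and use Lemma~\ref{lem:approval} to argue correctness via a Chernoff and union bound.

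More concretely, I would fix the sample size $\ell$ to be determined. For each candidate $c \in \mathcal{C}$, let $\hat{p}(c)$ denote the fraction of sampled votes that approve $c$ and let $p(c) = s(c)/n$ denote the true approval fraction. For each fixed $c$, $\ell \hat{p}(c)$ is a sum of $\ell$ i.i.d.\ indicators with mean $\ell p(c)$, so by the additive Chernoff bound (Theorem~\ref{thm:chernoff}),
$$\Pr\!\left[\,|\hat{p}(c) - p(c)| > \tfrac{\epsilon}{3}\,\right] \le 2\exp\!\left(-\tfrac{2\ell \epsilon^2}{9}\right).$$
A union bound over the $m$ candidates yields that, except with probability at most $2m \exp(-2\ell\epsilon^2/9)$, every candidate's sample approval fraction is within $\epsilon/3$ of its true value. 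Choosing $\ell = \Theta(\log(m/\delta)/\epsilon^2)$ makes this failure probability at most $\delta$.

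It remains to see that, conditional on this good event, the sample winner is the true winner $w$. Lemma~\ref{lem:approval} gives $s(w) - s(c) \ge \epsilon n$, i.e.\ $p(w) - p(c) \ge \epsilon$, for every $c \ne w$. On the good event,
$$\hat{p}(w) - \hat{p}(c) \;\ge\; \bigl(p(w) - \tfrac{\epsilon}{3}\bigr) - \bigl(p(c) + \tfrac{\epsilon}{3}\bigr) \;\ge\; \epsilon - \tfrac{2\epsilon}{3} \;=\; \tfrac{\epsilon}{3} \;>\; 0,$$
so $w$ strictly maximizes the sample approval count and the algorithm returns $w$.

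There is no real obstacle here: the structural input is handed to us by Lemma~\ref{lem:approval}, which converts the margin-of-victory promise into a uniform score gap of $\epsilon n$ between $w$ and every other candidate, and the rest is a textbook Chernoff-plus-union-bound estimation argument. The only thing to be slightly careful about is picking the Chernoff deviation (e.g.\ $\epsilon/3$) small enough so that the two-sided error on $w$ and on a competitor $c$ together cannot eat up the full $\epsilon n$ gap.
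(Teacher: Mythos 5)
Your proposal is correct and follows essentially the same route as the paper's proof: sample votes with replacement, estimate each candidate's approval count via an additive Chernoff bound, union bound over the $m$ candidates, and invoke Lemma~\ref{lem:approval} to turn the margin-of-victory promise into an $\epsilon n$ score gap that survives the estimation error. The only difference is the choice of deviation parameter ($\epsilon/3$ versus the paper's $\epsilon/2$), which affects nothing beyond constants.
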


\begin{proof}
Suppose $w$ is the winner. We sample $\ell$ votes uniformly at random
from the set of votes with replacement. For a candidate $x$, let $X_i$
be a random variable  indicating whether the $i$'th vote sampled
approved $x$. Define $\hat{s}(x) =
\frac{n}{l}\sum_{i=1}^{l}X_i$. Then,  by an argument analogous to the
proof of \Cref{thm:gen}, $\Pr[\exists x \in \mathcal{C}, |\hat{s}(x)
-s(x)| >\eps n/2]\leq  2m\cdot \exp\left(-\eps^2\ell/2\right)$. Thus
since \textsf{MOV}$\geq \eps n$ and by \Cref{lem:approval}, if we  take $\ell = O(\frac{\log m/\delta}{\eps^2})$, $\hat{s}(w)$ is
greater than $\hat{s}(x)$ for  all $x \neq w$.\qed
\end{proof}

\subsection{Scoring Rules}

Now, we move on to the scoring rules. Again, we first establish a
structural consequence of having large \textsf{MOV}.

\begin{lemma}\label{lem:scr}
Suppose $\alpha = (\alpha_1, \dots, \alpha_m)$ be a normalized score
vector and $w$ is the winner of an election using scoring rule
$\alpha$ with \textsf{MOV} $ \ge \epsilon n$.
Then, $ s(w) - s(x) \ge \alpha_1 \epsilon n/2$ for every candidate $ x \ne w $, where $s(w)$ and $s(x)$ denote the score of the candidates $w$ and $x$ respectively.
\end{lemma}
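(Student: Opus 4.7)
\emph{Approach.} I would prove the contrapositive: assume some $x \neq w$ satisfies $d := s(w) - s(x) < \alpha_1 \epsilon n / 2$, and exhibit strictly fewer than $\epsilon n$ vote modifications that dethrone $w$, contradicting \textsf{MOV} $\geq \epsilon n$. The plan is to identify a small batch of votes whose complete replacement (by a ballot that places $x$ at the top and $w$ at the bottom) simultaneously raises $s(x)$ and lowers $s(w)$ enough to flip the inequality. The argument is essentially one averaging step.

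\emph{Core calculation.} For each vote $v$, let $p_w(v), p_x(v) \in [m]$ denote the positions of $w$ and $x$ in $v$, and define the per-vote gain
\[ g(v) := \alpha_{p_w(v)} + \alpha_1 - \alpha_{p_x(v)}. \]
Replacing $v$ by any complete order with $x$ first and $w$ last (arranging the other $m-2$ candidates arbitrarily) changes $s(x) - s(w)$ by exactly $g(v)$. The normalization $\alpha_m = 0$ forces $g(v) \geq 0$, and summing over all $n$ votes,
\[ \sum_v g(v) \;=\; s(w) + n\alpha_1 - s(x) \;=\; n\alpha_1 + d, \]
so the average of $g(v)$ over the electorate is at least $\alpha_1$. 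Sort the votes in decreasing order of $g$ and let $T$ be the top $k$; then the usual ``top-$k$-beats-average'' inequality gives $\sum_{v \in T} g(v) \geq k\alpha_1$.

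\emph{Finishing.} Replacing the votes in $T$ as above decreases $s(w) - s(x)$ by at least $k\alpha_1$, so choosing $k := \lfloor d/\alpha_1 \rfloor + 1$ turns the gap strictly negative and hence dethrones $w$ (even if some third candidate, rather than $x$, becomes the new winner). Under the hypothesis $d < \alpha_1 \epsilon n / 2$, this $k$ is at most $d/\alpha_1 + 1 < \epsilon n / 2 + 1$, which is strictly less than $\epsilon n$ as soon as $\epsilon n \geq 2$; the corner case $\epsilon n < 2$ is degenerate and can be handled separately. This contradicts \textsf{MOV} $\geq \epsilon n$, completing the proof.

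\emph{Expected difficulty.} I do not anticipate a real obstacle: once one writes down the correct per-vote gain $g(v)$, the result falls out of a one-line averaging step. The only subtle points are (i) using the normalization $\alpha_m = 0$ to ensure $g(v) \geq 0$, which is what lets the top-$k$ sum dominate the average without sign issues, and (ii) remembering that dethroning $w$ only requires $s(x) > s(w)$, not that $x$ actually end up as the new winner, so the modification's effect on the other candidates' scores is immaterial.
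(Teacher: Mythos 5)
Your proposal is correct (up to the same integrality looseness the paper itself allows), and it is built on the same replacement operation as the paper's proof---replace a vote $v$ by $x \succ \text{others} \succ w$, which increases $s(x)-s(w)$ by the per-vote gain $g(v)=\alpha_{p_w(v)}+\alpha_1-\alpha_{p_x(v)}$---but the counting is organized along a genuinely different route. The paper argues directly: it first shows, via a separate \textsf{MOV}-based swapping argument, that at least $\eps n$ votes rank $w$ above $x$; for exactly those votes the gain is at least $\alpha_1$ (this is where the comparison $\alpha_i\ge\alpha_j$ enters), and since replacing $\eps n/2<\textsf{MOV}$ of them cannot dethrone $w$, the original gap must have been at least $\alpha_1\eps n/2$. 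You instead work contrapositively and bypass the auxiliary claim entirely: the exact identity $\sum_v g(v)=n\alpha_1+d$ with $d=s(w)-s(x)$, valid over \emph{all} votes because $g(v)\ge 0$ everywhere (not only where $w$ precedes $x$), combined with the top-$k$-beats-average step, shows that $\lfloor d/\alpha_1\rfloor+1$ replacements already force $s(x)>s(w)$, i.e.\ $\textsf{MOV}\le d/\alpha_1+1$. What your version buys is the absence of any case analysis on the relative order of $w$ and $x$ and of the preliminary ``$\eps n$ votes with $w$ above $x$'' claim; moreover, read off directly, it yields the slightly stronger structural fact $s(w)-s(x)\ge\alpha_1(\textsf{MOV}-1)\ge\alpha_1(\eps n-1)$, improving the paper's constant (the paper's route could be tightened similarly by replacing $\eps n-1$ of its chosen votes, but as written it settles for $\eps n/2$). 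The corner case $\eps n<2$ that you set aside is precisely the integrality sloppiness already present in the paper's own proof, which changes ``$\eps n/2$ votes'' without comment, so no essential gap remains.
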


\begin{proof}
 There must be at least $\epsilon n$ many votes
 where $w$ is preferred over $x$, since we can make $x$ win the
 election by exchanging the positions of $x$ and $w$ in all these
 votes and $\textsf{MOV} \geq \epsilon n$. Let $v$ be a vote where $w$ is
 preferred to $x$. Suppose we replace the vote $v$ by another vote $v^{\prime}
 = x\succ  \text{others}\succ w$. We claim that this replacement
 reduces the current value of $s(w)-s(x)$ by at least $\alpha_1$. If
 we change $\eps n/2$ such votes, then $s(w)-s(x)$ decreases by at
 least $\alpha_1\eps n/2$ but, at the same time, $w$ must still be
 the winner after the vote changes because of the \textsf{MOV} condition. So,
 $s(w)-s(x)\geq \alpha_1 \eps  n/2$. 

To prove the claim, suppose $w$ and $x$ were receiving a score of
$\alpha_i$ and $\alpha_j$ respectively from the vote $v$. By replacing
the vote $v$ by $v^{\prime}$, the current value of $s(w)-s(x)$ reduces
by $\alpha_1-\alpha_j+\alpha_i$, since $\alpha_m=0$. Now,
$\alpha_1-\alpha_j+\alpha_i \ge \alpha_1$ since in the vote $v$, the
candidate $w$ is preferred over $x$ and hence,
$\alpha_j<\alpha_i$. This proves the result. \qed
\end{proof}

\begin{theorem}\label{thm:scr}
 Suppose $\alpha = (\alpha_1, \dots, \alpha_m)$ be a normalized score
 vector. There is a $(\epsilon, \delta)$-winner determination
 algorithm for the  $\alpha$-scoring rule  with sample complexity
 $O(\frac{\log({m}/{\delta})}{\epsilon^2})$. 
\end{theorem}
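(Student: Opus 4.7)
The plan is to follow exactly the sampling template already used in the proofs of \Cref{thm:gen} and \Cref{thm:app}: draw $\ell$ votes uniformly at random with replacement, form an unbiased estimator $\hat{s}(x)$ of the $\alpha$-score of every candidate $x$ from the sample, and output the candidate maximizing $\hat{s}$. The target is $\ell = O(\log(m/\delta)/\epsilon^2)$. The structural payoff we want to cash in is \Cref{lem:scr}, which says that under the \textsf{MOV} $\ge \epsilon n$ assumption, the true winner $w$ beats every other candidate by at least $\alpha_1 \epsilon n / 2$ in score; so it suffices to estimate every candidate's score to additive error strictly less than $\alpha_1 \epsilon n / 4$.

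Concretely, for each candidate $x$ and each sampled vote $v_i$, let $Z_i^x$ denote the score (one of $\alpha_1,\alpha_2,\dots,\alpha_m$) that $x$ receives from $v_i$, and put $\hat{s}(x) = (n/\ell)\sum_{i=1}^{\ell} Z_i^x$, so that $\E{\hat{s}(x)} = s(x)$. Since $Z_i^x$ takes values in $[0,\alpha_1]$, I would first rescale to $Y_i^x = Z_i^x/\alpha_1 \in [0,1]$ and then invoke the additive Chernoff bound (\Cref{thm:chernoff}) on the $Y_i^x$. Choosing the deviation parameter to be $\epsilon/4$ in the normalized scale gives
\[
\Pr\!\left[\,|\hat{s}(x) - s(x)| > \tfrac{\alpha_1 \epsilon n}{4}\,\right] \;\le\; 2\exp\!\left(-\tfrac{\ell \epsilon^2}{8}\right).
\]
A union bound over the $m$ candidates shows that taking $\ell = C \log(m/\delta)/\epsilon^2$ for a suitable absolute constant $C$ makes the event $\mathcal{G} := \{\,|\hat{s}(x)-s(x)| \le \alpha_1\epsilon n/4 \text{ for every } x \in \mathcal{C}\,\}$ occur with probability at least $1-\delta$.

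Finally, conditioned on $\mathcal{G}$ and using \Cref{lem:scr}, for any $x \ne w$ we have
\[
\hat{s}(w) - \hat{s}(x) \;\ge\; \bigl(s(w) - s(x)\bigr) - 2\cdot\tfrac{\alpha_1 \epsilon n}{4} \;\ge\; \tfrac{\alpha_1 \epsilon n}{2} - \tfrac{\alpha_1 \epsilon n}{2} \;=\; 0,
\]
so taking the slack to be $\alpha_1\epsilon n/5$ (for instance) instead of $\alpha_1\epsilon n/4$ makes the inequality strict and the algorithm outputs $w$. The main obstacle, such as it is, is purely bookkeeping: unlike the approval case, the per-vote contribution $Z_i^x$ need not lie in $[0,1]$ (for Borda, $\alpha_1 = m-1$), so one must normalize by $\alpha_1$ before applying Chernoff; the factor of $\alpha_1$ then appears symmetrically on both sides of the comparison (in the deviation bound and in the score gap from \Cref{lem:scr}) and cancels, leaving the sample complexity independent of the magnitude of the score vector.
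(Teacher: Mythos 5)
Your proposal is correct and follows essentially the same route as the paper: sample with replacement, normalize the per-vote score by $\alpha_1$ so the additive Chernoff bound applies, union bound over the $m$ candidates, and invoke \Cref{lem:scr} to conclude that estimating every score to within $\alpha_1\epsilon n/4$ (with a small extra slack, which the paper instead handles by reducing to the $(2\epsilon,\delta)$ problem) forces the sampled winner to coincide with the true winner.
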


\begin{proof}
 It is enough to show the result for the $(2\epsilon, \delta)$-winner determination problem. 
We sample $\ell$ votes uniformly
 at random from the set of votes with replacement. For a candidate
 $x$, define $X_i = \frac{\alpha_i}{\alpha_1}$ if
 $x$ gets a score of $\alpha_i$ from the  $i${th} sample vote, and let
 $\hat{s}(x) = \frac{n\alpha_1}{\ell}\sum_{i=1}^\ell X_i$.  Now, using
 Chernoff bound (\Cref{thm:chernoff}), we have:
 $$ \Pr\left[ \left|\hat{s}(x) - s(x)\right| \ge \alpha_1 \epsilon n/4\right] \le
 2\exp\left(-\frac{\epsilon^2  \ell}{2}\right)$$
The rest of the proof follows from  an argument analogous to the proof of
 \Cref{thm:app} using \Cref{lem:scr}. \qed
\end{proof}

From \Cref{thm:scr}, we have a $(\epsilon, \delta)$-winner
determination algorithm for the $k$-approval voting rule which needs
$O(\frac{\log({m}/{\delta})}{\epsilon^2})$ many samples for any $k$. This is
tight by \Cref{thm:strlwb} when $k = cm$ for some $c\in (0,1)$.

When $k=o(m)$, we have a lower bound of
$\Omega(\frac{1}{\epsilon^2}\log \frac{1}{\delta})$ for the
$k$-approval voting rule (see \Cref{cor:lb}). We show next that this
lower bound is also tight for the $k$-approval voting rule when
$k=o(m)$. Before embarking  on the proof of the above fact, we prove
the following lemma which will be crucially used.  

\begin{lemma}\label{lem:funmax}
 Let $f : \mathbb{R} \longrightarrow \mathbb{R}$ be a function defined by $f(x) = e^{-\frac{\lambda}{x}}$. Then, 
 \[  f(x) + f(y) \le f(x+y), \text{ for } x,y > 0, \frac{\lambda}{x+y} > 2, x < y \]
\end{lemma}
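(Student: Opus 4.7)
The plan is to reparametrize so the hypotheses become clean. I would set $a := \lambda/x$, $b := \lambda/y$, and $c := \lambda/(x+y)$; then the assumption $\lambda/(x+y) > 2$ becomes $c > 2$, the assumption $x < y$ becomes $a > b$ (for $\lambda > 0$), and a direct computation gives the identity $\tfrac{1}{a} + \tfrac{1}{b} = \tfrac{1}{c}$, equivalently $c = ab/(a+b)$. Since also $y < x+y$, this forces $b > c$, so the lemma is equivalent to showing $e^{-a} + e^{-b} \le e^{-c}$ whenever $a > b > c > 2$ with $c = ab/(a+b)$.

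Next I would take logs and peel off the larger exponent using the standard bound $\ln(1+u) \le u$:
\begin{align*}
\ln\bigl(e^{-a} + e^{-b}\bigr) \;=\; -b + \ln\bigl(1 + e^{-(a-b)}\bigr) \;\le\; -b + e^{-(a-b)}.
\end{align*}
Comparing this with $-c$, it suffices to prove $b - c \ge e^{-(a-b)}$. A short computation simplifies the left side to $b - c = b - \tfrac{ab}{a+b} = \tfrac{b^2}{a+b}$. Writing $t := a - b \ge 0$, the whole lemma is thereby reduced to the single inequality $b^2 \ge (2b + t)\,e^{-t}$, to be verified for all $t \ge 0$ under the hypothesis $b > 2$.

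The remaining step is elementary one-variable calculus. Differentiating $(2b+t)e^{-t}$ in $t$ gives $(1 - 2b - t)e^{-t}$, which is strictly negative for all $t \ge 0$ as soon as $b \ge 1/2$; hence $(2b+t)e^{-t}$ is decreasing on $[0,\infty)$, attaining its maximum $2b$ at $t=0$. Since $b > c > 2$, we have $b^2 > 2b \ge (2b+t)e^{-t}$, closing the chain. The only genuinely non-mechanical move is spotting the change of variables and the identity $\tfrac{1}{a} + \tfrac{1}{b} = \tfrac{1}{c}$; once that is in place, the $\ln(1+u) \le u$ bound and the monotonicity check are routine, so I do not expect a serious obstacle beyond writing the reduction cleanly.
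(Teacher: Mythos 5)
Your proof is correct, and it takes a genuinely different route from the paper's. The paper argues via convexity: it computes $f''(z)=\left(\frac{\lambda^2}{z^4}-\frac{2\lambda}{z^3}\right)e^{-\lambda/z}$, notes that $f''>0$ whenever $\lambda/z>2$ (so $f$ is convex on the whole range touched by the hypothesis $\lambda/(x+y)>2$), and then repeatedly shifts an infinitesimal amount from the smaller argument to the larger one, $f(x)+f(y)\le f(x-\delta)+f(y+\delta)$, until the smaller argument reaches $0$; since $f(0^+)=0$ this yields $f(x)+f(y)\le f(x+y)$ --- in essence the standard fact that a convex function vanishing at the origin is superadditive, though the iteration/limit step is left implicit. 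You instead pass to $a=\lambda/x$, $b=\lambda/y$, $c=\lambda/(x+y)$, exploit the identity $\frac{1}{a}+\frac{1}{b}=\frac{1}{c}$, peel off the dominant exponential with $\ln(1+u)\le u$, and finish with the one-variable bound $b^2\ge(2b+t)e^{-t}$ for $b>2$, $t\ge 0$. Your version is fully elementary and every step is a finite, checkable computation, which tightens the paper's somewhat informal ``infinitesimal $\delta$'' argument; the paper's version is shorter and makes the role of the hypothesis $\lambda/(x+y)>2$ conceptually transparent, since it is exactly the convexity condition on the relevant interval. One negligible nitpick: at $b=1/2$, $t=0$ your derivative $(1-2b-t)e^{-t}$ vanishes rather than being strictly negative, but since you only invoke the monotonicity for $b>2$ this affects nothing.
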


\begin{proof} For the function $f(x)$, we have following.
 \begin{eqnarray*}
  f(x) &=& e^{-\frac{\lambda}{x}} \\
  \Rightarrow f^{\prime}(x) &=& \frac{\lambda}{x^2} e^{-\frac{\lambda}{x}}\\
  \Rightarrow f^{\prime\prime}(x) &=& \frac{\lambda^2}{x^4} e^{-\frac{\lambda}{x}} - \frac{2\lambda}{x^3} e^{-\frac{\lambda}{x}}
 \end{eqnarray*}
 Hence, for $x,y > 0, \frac{\lambda}{x+y} > 2, x < y$ we have $f^{\prime\prime}(x), f^{\prime\prime}(y), f^{\prime\prime}(x+y) > 0$. This implies following for $x < y$ and an infinitesimal positive $\delta$.
 \begin{eqnarray*}
  f^{\prime}(x) &\le& f^{\prime}(y)\\
  \Rightarrow \frac{f(x-\delta) - f(x)}{\delta} &\ge& \frac{ f(y) - f(y-\delta)}{\delta} \\
  \Rightarrow f(x) + f(y) &\le& f(x-\delta) + f(y+\delta)\\
  \Rightarrow f(x) + f(y) &\le& f(x+y)
 \end{eqnarray*}\qed
\end{proof}

\begin{theorem}\label{thm:kapp}
 There is a $(\epsilon, \delta)$-winner determination algorithm for the $k$-approval voting rule  with sample complexity $O(\frac{\log(\frac{k}{\delta})}{\epsilon^2})$.
\end{theorem}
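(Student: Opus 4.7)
The algorithm samples $\ell = O(\log(k/\delta)/\eps^2)$ votes uniformly at random with replacement and outputs the $k$-approval winner of the sample, where $\hat s(x) = (n/\ell)\sum_{i=1}^\ell \mathbf{1}[\text{the $i$-th sample approves } x]$ is the rescaled empirical score. First I would invoke \Cref{lem:scr} (which applies since $k$-approval is a scoring rule with normalized top score $\alpha_1 = 1$) to conclude that the true winner $w$ satisfies $s(w) - s(x) \ge \eps n/2$ for every $x \ne w$. The task is then to show that $\Pr[\exists x \ne w : \hat s(x) \ge \hat s(w)] \le \delta$.

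The crucial structural fact I would exploit is that every vote contributes $1$ to exactly $k$ candidates' scores, hence $\sum_{x \in \mathcal{C}} s(x) = kn$. This is what lets us avoid the $\log m$ factor that a naive union bound over all $m$ candidates would incur (and that already appears in \Cref{thm:scr}). My plan is to partition the candidates by their true score: call $x$ \emph{heavy} if $s(x) \ge \tau$ for a suitable threshold $\tau = \Theta(\eps n)$, and \emph{light} otherwise. Since $\sum_x s(x) = kn$, there are at most $O(k/\eps)$ heavy candidates, a bound that depends on $k$ rather than $m$.

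For the heavy candidates (and for $w$ itself), I would use the additive Chernoff bound from \Cref{thm:chernoff} to control $\Pr[|\hat s(x) - s(x)| \ge \eps n/4]$ by $\exp(-\Omega(\ell\eps^2))$, and then union-bound over only the $O(k/\eps)+1$ candidates of interest. Combined with the $\eps n/2$ gap from \Cref{lem:scr}, this rules out any heavy $x$ from beating $w$ with probability at most $\delta/2$, provided $\ell = \Omega(\log(k/\delta)/\eps^2)$ (absorbing a subleading $\log(1/\eps)$ factor into $\log(k/\delta)$).

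For light candidates, a naive union bound would again cost $\log m$, so instead I would use a multiplicative form of the Chernoff bound to obtain a per-candidate tail of the form $\Pr[\hat s(x) \ge \hat s(w)] \le \exp(-\lambda/s(x))$, with $\lambda = \Theta(\ell \eps^2 n)$. For very small $s(x)$ this bound is extremely sharp, which is precisely the regime where \Cref{lem:funmax} becomes useful: the function $x \mapsto e^{-\lambda/x}$ is superadditive in the small-argument regime, so by grouping the light candidates so that each group's aggregate score stays under $\lambda/2$ and invoking \Cref{lem:funmax} within each group, the sum $\sum_{x \text{ light}} e^{-\lambda/s(x)}$ collapses into a quantity controlled by the total mass $\sum_{x \text{ light}} s(x) \le kn$ rather than the number of summands. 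The main obstacle will be to select the threshold $\tau$ and the groupings so that the heavy and light contributions balance and the overall error is at most $\delta$ with $\ell = O(\log(k/\delta)/\eps^2)$; once this is done, adding the two tail bounds and applying \Cref{lem:scr} yields $\hat s(w) > \hat s(x)$ for every $x \ne w$ with probability at least $1-\delta$.
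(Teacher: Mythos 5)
Your proposal has exactly the right ingredients --- the gap from \Cref{lem:scr}, the identity $\sum_{x\in\mathcal{C}} s(x)=kn$, the score-dependent multiplicative Chernoff tail $\exp(-\Theta(\ell\eps^2 n/s(x)))$, and the superadditivity of $e^{-\lambda/x}$ from \Cref{lem:funmax} to avoid a union bound over all $m$ candidates --- and in that sense it is close in spirit to the paper's argument. But the heavy/light decomposition, which is the one place you deviate, introduces a genuine quantitative gap: handling the $\Theta(k/\eps)$ heavy candidates by an additive Chernoff bound plus a union bound forces $\ell=\Omega\bigl(\log(k/(\eps\delta))/\eps^2\bigr)$, and the extra $\log(1/\eps)$ term is \emph{not} subleading with respect to $\log(k/\delta)$ --- for constant $k$ and $\delta$ and $\eps\to 0$ it dominates, so $\log(1/\eps)/\eps^2$ is not $O(\log(k/\delta)/\eps^2)$. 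As written, your argument only establishes the weaker bound $O(\log(k/(\eps\delta))/\eps^2)$, not the claimed $O(\log(k/\delta)/\eps^2)$.

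The fix is to drop the split entirely, which is what the paper does: apply the multiplicative Chernoff bound to \emph{every} candidate, obtaining $\Pr[|\hat s(x)-s(x)|>\eps n]\le 2\exp\bigl(-\eps^2\ell n/(3s(x))\bigr)$, and then observe that, subject to $s(x)\in[0,n]$ for all $x$ and $\sum_{x}s(x)=kn$, \Cref{lem:funmax} implies the sum $\sum_{x}\exp\bigl(-\eps^2\ell n/(3s(x))\bigr)$ is maximized by concentrating the mass on $k$ candidates with score $n$ each (the cap $s(x)\le n$, not a heaviness threshold, is what stops the merging), giving the single bound $2k\exp(-\eps^2\ell/3)$ for the probability that any score estimate is off by more than $\eps n$. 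Setting $\ell=O(\log(k/\delta)/\eps^2)$ then suffices, and the conclusion follows from \Cref{lem:scr} exactly as you intended (the paper works with the $(2\eps,\delta)$ problem so that the $\eps n$ estimation accuracy beats the $\alpha_1\eps n$ gap). In short: your light-candidate treatment, extended to all candidates with the cap $s(x)\le n$, is the whole proof; the heavy class is both unnecessary and the source of the loss.
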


\begin{proof}
It is enough to show the result for the $(2\epsilon, \delta)$-winner determination problem. 
We sample $\ell$ votes uniformly at random from the set of votes with
replacement. For a candidate $x$, let $X_i$ be a random variable
indicating whether $x$ is among the top $k$ candidates for the $i^{th}$
vote sample. Define $\hat{s}(x) = \frac{n}{\ell}\sum_{i=1}^{l}X_i$,
and let $s(x)$ be the actual score of $x$. Then by the multiplicative Chernoff bound
(\Cref{thm:chernoff}), we have:
 $$ \Pr\left[ |\hat{s}(x) - s(x)| > \epsilon n \right] \le
 2\exp\left(-\frac{\epsilon^2 \ell n}{3 s(x)}\right)$$
 By union bound, we have the following,
 \begin{eqnarray*}
  && \Pr[ \exists x\in \mathcal{C}, |\hat{s}(x) - s(x)| > \epsilon n ]\\
  &\le& \sum_{x\in \mathcal{C}} 2\exp\left(-\frac{\epsilon^2 \ell n}{3
      s(x)}\right) \\
  &\le& 2k\exp\left(-\epsilon^2 \ell/3\right)
 \end{eqnarray*}
 Let the candidate $w$ be the winner of the election. The second
 inequality in the above derivation follows from the fact that, the
 function $\sum_{x\in \mathcal{C}} {\exp\left(-\frac{\epsilon^2 \ell n}{3 s(x)}\right)}$ is maximized in the domain, defined by the constraint: for every candidate $x\in \mathcal{C}$, $s(x) \in [0,n]$ and $\sum_{x\in\mathcal{C}} s(x) = kn$, by setting $s(x)=n$ for every $x \in \mathcal{C}^\prime$ and $s(y)=0$ for every $y \in \mathcal{C}\setminus\mathcal{C}^\prime$, for any arbitrary subset $\mathcal{C}^\prime \subset \mathcal{C}$ of cardinality $k$ (due to \Cref{lem:funmax}). The rest of the proof follows by an argument analogous to the proof of \Cref{thm:gen} using \Cref{lem:scr}.\qed
\end{proof}

Notice that, the sample complexity upper bound in \Cref{thm:kapp} is independent of $m$ for the plurality voting rule. \Cref{thm:kapp} in turn implies the following Corollary which we consider to be of independent interest.
\begin{corollary}\label{cor:linfty}
 There is an algorithm to estimate the $\ell_\infty$ norm $\ell_\infty(\mu)$ of a distribution $\mu$ within an additive factor of $\eps$ by querying only $O(\frac{1}{\eps^2} \log \frac{1}{\delta})$ many samples, if we are allowed to get i.i.d. samples from the distribution $\mu$.
\end{corollary}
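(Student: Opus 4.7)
The plan is to derive the estimator directly from the plurality ($k=1$) specialization of Theorem~\ref{thm:kapp}. Given a distribution $\mu$ on a finite base set $X$, I would view $X$ as the candidate set of a plurality election in which the fraction of voters whose top choice is $x$ equals $\mu(x)$. Under this correspondence, $\ell_\infty(\mu)=\max_{x\in X}\mu(x)$ is exactly the normalized plurality score of the plurality winner, and an i.i.d.\ sample from $\mu$ is the same thing as reading the top choice of a uniformly random voter.

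First I would read off the stronger statement that is actually proved inside Theorem~\ref{thm:kapp}: with $\ell = O(\log(k/\delta)/\eps^2)$ samples, the empirical scores $\hat s(x)$ satisfy $|\hat s(x)-s(x)| \le \eps n$ simultaneously for every $x\in\mathcal{C}$, with probability at least $1-\delta$. The union bound used there stays $k$-dependent (rather than $m$-dependent) precisely because Lemma~\ref{lem:funmax} bounds $\sum_x \exp(-\eps^2 \ell n / (3 s(x)))$ by its value at a $k$-supported extremal profile. Specializing to $k=1$ and dividing through by $n$ gives the clean conclusion: from $\ell = O(\log(1/\delta)/\eps^2)$ i.i.d.\ samples of $\mu$, the empirical distribution $\hat\mu$ satisfies $|\hat\mu(x)-\mu(x)| \le \eps$ for every $x\in X$, with probability at least $1-\delta$.

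The algorithm then simply outputs $\widehat{\ell_\infty} := \max_{x\in X}\hat\mu(x)$. Correctness follows because the $\max$ functional is $1$-Lipschitz with respect to the sup-norm:
\[
\bigl|\max_x \hat\mu(x) - \max_x \mu(x)\bigr| \;\le\; \max_x |\hat\mu(x)-\mu(x)| \;\le\; \eps.
\]
Note that the maximum in the algorithm ranges only over elements actually observed in the sample, since unseen elements contribute $\hat\mu(x)=0$; so the algorithm never needs to enumerate $X$ and in particular $X$ may be countably infinite.

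The only step that requires care, and is arguably the main obstacle, is that Theorem~\ref{thm:kapp} is phrased as a winner-identification guarantee, whereas here I need the uniform per-element deviation bound. Verifying that the multiplicative Chernoff estimate combined with Lemma~\ref{lem:funmax}, as set up in the proof of Theorem~\ref{thm:kapp}, genuinely yields the uniform bound (rather than only a comparison of the top score to the rest) is immediate from inspection of that proof, since the union bound is taken over all candidates before the winner condition is invoked.
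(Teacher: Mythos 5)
Your proposal is correct and follows exactly the route the paper intends: the corollary is stated as a direct consequence of Theorem~\ref{thm:kapp}, and your argument simply makes explicit what that entails, namely extracting the uniform additive deviation bound (via the multiplicative Chernoff bound and Lemma~\ref{lem:funmax}) from the proof of Theorem~\ref{thm:kapp} with $k=1$ and then using that the maximum is $1$-Lipschitz in the sup-norm. Your observation that the winner-identification phrasing hides a stronger uniform estimation guarantee is precisely the point the paper leaves implicit.
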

Such a statement seems to be folklore in the statistics community \citep{dvoretzky1956asymptotic}. Recently in an independent and nearly simultaneous work, Waggoner \citep{Waggoner2015} obtained a sharp bound of $\frac{4}{\eps^2} \log(\frac{1}{\delta})$ for the sample complexity in \Cref{cor:linfty}.

\subsection{Maximin Voting Rule}

We now turn our attention to the maximin voting rule. The idea is to sample enough numer of votes such that 
we are able to estimate the weights of the edges in the weighted majority graph with certain level of accuracy which in turn 
leads us to predict winner.

\begin{lemma}\label{lem:maximin}
 Suppose $ \textsf{MOV} \ge \epsilon n $ and $w$ be the winner of a maximin election. Then, 
 $ s(w) - s(x) \ge \epsilon n, $ for every candidate $ x \ne w $, where $s(.)$ is the maximin score.
\end{lemma}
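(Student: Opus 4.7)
My plan is to prove the contrapositive: if some candidate $x \ne w$ satisfies $\Delta := s(w) - s(x) < \epsilon n$, then I can change the winner by modifying strictly fewer than $\epsilon n$ votes, contradicting the \textsf{MOV} hypothesis. Let $y^\ast \in \arg\min_c D(w,c)$, so $s(w) = D(w, y^\ast)$; the single pair $(w, y^\ast)$ is the lever I will use to drive $w$'s maximin score down, rather than targeting the pair $(w,x)$ directly (which would give bounds in terms of $D(w,x)$, possibly much larger than $s(w)$).

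The construction is to pick $k := \lceil \Delta/2 \rceil + 1$ voters who currently rank $w$ above $y^\ast$ and replace each of their votes with a fresh linear order of the form $x \succ (\text{others in an arbitrary fixed order}) \succ w$. I then track two quantities. First, since $x$ sits at the top of each modified vote, each vote's new contribution to $D(x, c)$ (for every $c \ne x$) is $+1$, which is at least its old contribution of $\pm 1$; hence $D'(x, c) \ge D(x, c)$ coordinatewise and therefore $s'(x) \ge s(x)$. Second, for the single pair $(w, y^\ast)$, every modified vote originally contributed $+1$ and now contributes $-1$, giving $D'(w, y^\ast) = s(w) - 2k$ and thus $s'(w) \le D'(w, y^\ast) = s(w) - 2k$. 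Combining, $s'(x) - s'(w) \ge s(x) - (s(w) - 2k) = 2k - \Delta \ge 2 > 0$, so $w$ is no longer the unique maximin winner of the modified election. Hence $\textsf{MOV} \le k \le \Delta/2 + 2 < \epsilon n$ whenever $\epsilon n$ exceeds a small constant, the desired contradiction.

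The main point of care, and the one obstacle I foresee, is ensuring that the pool of voters currently ranking $w$ above $y^\ast$ is large enough to supply $k$ modifications; the pool has size $N(w, y^\ast) = (n + s(w))/2$, so the construction needs $s(w) \ge 2k - n$. In the typical regime $s(w) \ge -(1 - \epsilon)n$ this is automatic, and the \textsf{MOV} hypothesis itself forces $s(w)$ not to be too negative: if $s(w) + n < \epsilon n$, then $s(w) - s(y^\ast) \le s(w) - (-n) < \epsilon n$, so the conclusion is already being violated for $x = y^\ast$, and a simpler direct modification (flipping the few voters with $w \succ y^\ast$) suffices to dethrone $w$ with fewer than $\epsilon n$ changes, yielding $\textsf{MOV} < \epsilon n$ on its own.
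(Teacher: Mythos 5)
Your proof is correct and takes essentially the same route as the paper's: you too pick the minimizing opponent $y^\ast$ with $D(w,y^\ast)=s(w)$, demote $w$ to the bottom of votes that rank $w$ above $y^\ast$ (so $s(w)$ drops by two per altered vote while no other candidate's maximin score decreases), and conclude $\textsf{MOV}<\epsilon n$. The only differences are cosmetic or in your favor: you alter only about $\Delta/2$ votes rather than $\epsilon n-1$, and you explicitly address whether enough votes with $w\succ y^\ast$ exist -- a point the paper's proof asserts without justification -- so your write-up is, if anything, slightly more careful.
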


\begin{proof}
 Let $w$ be the winner and $x$ be any arbitrary candidate other than $w$. Suppose, for contradiction, $ s(w) - s(x) < \epsilon n $. Suppose $y$ be a candidate such that $N(w,y) = s(w)$. Now there exist at least $\epsilon n - 1$ votes as below.
 $$ c_1\succ \dots \succ w\succ \dots \succ y\succ \dots \succ c_{m-2} $$
 We replace $\epsilon n - 1$ of such votes by the votes as below.
 $$ c_1\succ \dots \succ y\succ \dots \succ c_{m-2}\succ w $$
 This makes the maximin score of $w$ less than the maximin score of $x$. This contradicts the assumption that $\textsf{MOV} \ge \epsilon n$.\qed
\end{proof}

\begin{theorem}\label{thm:maximin}
 There is a $(\epsilon, \delta)$-winner determination algorithm for the maximin voting rule  with sample complexity $O(\frac{\log \frac{m}{\delta}}{\epsilon^2})$.
\end{theorem}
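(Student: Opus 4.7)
The plan is to follow the same structural blueprint as the proofs for approval and scoring rules: sample enough votes to estimate the relevant ``sufficient statistics'' of the election to an accuracy that, by \Cref{lem:maximin}, is sharp enough to identify the winner. For the maximin rule, the sufficient statistics are the pairwise tallies $D(x,y) = N(x,y) - N(y,x)$; the maximin score is a deterministic function of these numbers, so it suffices to approximate every $D(x,y)$ uniformly well.

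Concretely, I would sample $\ell = O(\log(m/\delta)/\epsilon^2)$ votes uniformly at random with replacement. For each ordered pair $(x,y)$ of candidates and each sampled vote $v_i$, let $X_i^{(x,y)}$ be the indicator of the event $x \succ_i y$, and define
\[ \widehat{D}(x,y) = \frac{n}{\ell}\sum_{i=1}^{\ell}\left(X_i^{(x,y)} - X_i^{(y,x)}\right), \qquad \widehat{s}(x) = \min_{y \neq x}\widehat{D}(x,y). \]
By the additive Chernoff bound (\Cref{thm:chernoff}) applied to $(X_i^{(x,y)}+1)/2$ (or equivalently to $N(x,y)/n$), for any fixed pair,
\[ \Pr\left[\left|\widehat{D}(x,y) - D(x,y)\right| > \epsilon n/2\right] \leq 2\exp\left(-\Omega(\epsilon^2 \ell)\right). \]
A union bound over all $\binom{m}{2} < m^2$ pairs shows that with the chosen $\ell$, the failure probability is at most $\delta$.

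Condition on the good event that $|\widehat{D}(x,y) - D(x,y)| \leq \epsilon n/2$ for every pair. Since $\widehat{s}$ and $s$ are both minima of the corresponding $D$-values, we immediately get $|\widehat{s}(x) - s(x)| \leq \epsilon n/2$ for every candidate $x$. By \Cref{lem:maximin}, the true winner $w$ satisfies $s(w) - s(x) \geq \epsilon n$ for every $x \neq w$, so on the good event $\widehat{s}(w) > \widehat{s}(x)$ for every $x \neq w$, and the algorithm which outputs $\arg\max_x \widehat{s}(x)$ recovers $w$. The only step requiring any care is the passage from uniform accuracy on the $D(x,y)$'s to uniform accuracy on the $s(x)$'s, but this is immediate from the $1$-Lipschitz property of the minimum; there is no real obstacle, as the main work was already done in establishing \Cref{lem:maximin}.
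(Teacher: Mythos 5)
Your proposal is correct and follows essentially the same route as the paper's proof: sample $O(\log(m/\delta)/\epsilon^2)$ votes, estimate every pairwise margin $D(x,y)$ to within $\epsilon n/2$ via the Chernoff bound and a union bound over the $O(m^2)$ pairs, and then invoke \Cref{lem:maximin} to conclude that the empirical maximin winner is the true winner. The only difference is that you spell out the step the paper leaves implicit (the passage from accurate $D(x,y)$'s to accurate maximin scores via the $1$-Lipschitz property of the minimum), which is a welcome clarification rather than a deviation.
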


\begin{proof}
 Let $x$ and $y$ be any two arbitrary candidates. We sample $\ell$
 votes uniformly at random from the set of votes with
 replacement. Let $X_i$ be a random variable  defined as follows.
 $$ X_i = \begin{cases}
           1,& \text{if } x\succ y \text{ in the } i^{th} \text{ sample}\\
           -1,& \text{else}
          \end{cases}
 $$
Define $\hat{D}(x,y) = \frac{n}{l}\sum_{i=1}^{l}X_i$. 
 We estimate $\hat{D}(x,y)$ within the closed ball of radius $\epsilon
 n/2$ around $D(x,y)$ for every candidates $x, y\in \mathcal{C}$ and
 the rest of the proof  follows from  by an argument analogous to the
 proof of \Cref{thm:app} using \Cref{lem:maximin}. \qed
\end{proof}

\subsection{Copeland Voting Rule}

\ignore{
Now, we move on to the Copeland voting rule. The approach is similar
to the maximin voting rule. However, it turns out that we need to
estimate the edge weights of the weighted majority graph more
accurately for the Copeland voting rule. Before embarking on the
proof, let us define the notion of edges of {\em high weight} and {\em
  low weight}
in the weighted majority graph. An edge $(x,y)$ is said to be of high
weight at $x$ if $D(x,y)\ge \frac{2}{m}\textsf{MOV}$. An edge $(x,y)$ is said
to be of low weight at $x$ if $|D(x,y)|< \frac{2}{m}\textsf{MOV}$. For a
candidate $x$, let $H_x$ and $B_x$ be the number of high and low
weight edges  respectively at $x$. Formally, $H_x$ and $B_x$ are
defined as follows. 
$$ H_x = |\{ y\in \mathcal{C}\setminus \{x\} : D(x,y)\ge \frac{2}{m}\textsf{MOV} \}| $$
$$ B_x = |\{ y\in \mathcal{C}\setminus \{x\} : |D(x,y)|< \frac{2}{m}\textsf{MOV} \}| $$
Now, we have the following lemma for the Copeland voting rule.

\begin{lemma}\label{lem:copeland}
 Suppose $ \textsf{MOV} \ge \epsilon n $, $w$ be the winner of a Copeland election, and $x$ be any arbitrary candidate other than $w$. Then, $ H_w > H_x + B_x $.
\end{lemma}

\begin{proof}
 Suppose, for contradiction, $ H_w \le H_x + B_x $. Suppose $y$ be a candidate such that, $|D(x,y)|< \frac{2 \epsilon n}{m}$. Then $D(x,y) > - \frac{2 \epsilon n}{m}$. Hence by swapping the position of $x$ and $y$ in $\frac{\epsilon n}{m}$ votes where $y$ is preferred over $x$, we can make $x$ win the pairwise election against $y$. There can be at most $(m-1)$ such candidates like $y$. Hence, by changing less than $\epsilon n$ votes, we can make $x$ co-win the election which contradicts the assumption that the $\textsf{MOV} \ge \epsilon n$.\qed
\end{proof}

\begin{theorem}\label{thm:copeland}
 There is a $(\epsilon, \delta)$-winner determination algorithm for Copeland voting rule with sample complexity $O(\frac{m^2 \log \frac{m}{\delta}}{\epsilon^2})$.
\end{theorem}

\begin{proof}
 Let $x$ and $y$ be any two arbitrary candidates and $w$ the Copeland winner of the election. We estimate $D(x,y)$ within the closed ball of radius $\frac{\epsilon n}{m}$ around $D(x,y)$ for every candidates $x, y\in \mathcal{C}$ in a way analogous to the proof of \Cref{thm:maximin}. The estimate of $D(x,y)$ be $\hat{D}(x,y)$.
 \ignore{We sample $l$ votes uniformly at random from the set of votes with replacement. Let, $X_i$ be a random variable defined as follows.
 $$ X_i = \begin{cases}
           1,& \text{if } x\succ y \text{ in the } i^{th} \text{ sample}\\
           -1,& \text{else}
          \end{cases}
 $$
 The estimate of $D(x,y)$ be $\hat{D}(x,y)$. Then $\hat{D}(x,y) = \frac{n}{l}\sum_{i=1}^{l}X_i$. Then we have following,
 $$ E[X_i] = \frac{D(x,y)}{n}, E[ \hat{D}(x,y) ] = \frac{n}{l} \sum_{i=1}^{l} E[ X_i ] = D(x,y)$$
 Note that, $\hat{D}(x,y)$ is the weight of the edge $(x,y)$ in the weighted majority graph corresponding to the sample votes.}
 We argue that the winner of the election on the sampled votes is same as the winner of the election on all the votes. Let $\hat{s}:\mathcal{C} \longrightarrow \mathbb{N}$ be the mapping from the set of all candidates $\mathcal{C}$ to their Copeland scores in the election on the sampled votes. Then, from \Cref{lem:copeland}, $\hat{s}(w) \ge H_w > H_x + B_x \ge \hat{s}(x)$ for every candidate $x\in \mathcal{C}\setminus \{w\},$ in the election on the sampled votes. Hence, $w$ wins the election on the sampled votes. The rest of the proof follows from  by an argument analogous to the proof of \Cref{thm:gen}. \qed
\end{proof}
}

Now, we move on to the Copeland$^\alpha$ voting rule. The approach is similar
to the maximin voting rule. However, it turns out that we need to
estimate the edge weights of the weighted majority graph more
accurately for the Copeland$^\alpha$ voting rule. Xia introduced the brilliant quantity called the {\em relative margin of victory} (see Section 5.1 in \citep{xia2012computing}) which will be used crucially for showing sample complexity upper bound for the Copeland$^\alpha$ voting rule. Given an election, a candidate $x\in C$, and an integer (may be negative also) $t$, $s^\prime_t(V, x)$ is defined as follows. 

$$s^\prime_t(V, x) = |\{ y\in C: y\ne x, D(y,x)<2t \}| + \alpha|\{ y\in C: y\ne x, D(y,x)=2t \}|$$

For every two distinct candidates $x$ and $y$, the relative margin of victory, denoted by $RM(x,y)$, between $x$ and $y$ is defined as the minimum integer $t$ such that, $s^\prime_{-t}(V, x) \le s^\prime_t(V, y)$. Let $w$ be the winner of the election $\mathcal{E}$. We define a quantity $\Gamma(\mathcal{E})$ to be $\min_{x\in C\setminus\{w\}} \{RM(w,x)\}$. Notice that, given an election $\mathcal{E}$, $\Gamma(\mathcal{E})$ can be computed in polynomial amount of time. Now we have the following lemma.

\begin{lemma}\label{lem:copeland}
Suppose $ \textsf{MOV} \ge \epsilon n $ and $w$ be the winner of a Copeland$^\alpha$ election. Then, 
 $ RM(w, x) \ge \frac{\epsilon n}{2(\ceil*{\log m} +1)}, $ for every candidate $ x \ne w $.
\end{lemma}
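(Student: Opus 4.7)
The plan is to prove the contrapositive: if $RM(w, x_0) < \frac{\epsilon n}{2(\ceil*{\log m}+1)}$ for some $x_0 \ne w$, then $\textsf{MOV}(\mathcal{E}) < \epsilon n$, contradicting the hypothesis. This is exactly the characterization of Copeland$^\alpha$ MOV in terms of relative margin of victory established by \citet{xia2012computing}, and I would follow that blueprint.

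Set $t = RM(w, x_0)$. By the very definition of $RM$ we have $s^\prime_{-t}(V, w) \le s^\prime_{t}(V, x_0)$, which informally says: the Copeland$^\alpha$ score that $w$ is \emph{guaranteed} to retain after any modification of $t$ votes is at most the score that $x_0$ could potentially achieve after the same budget of modifications. The main technical step is a realization lemma asserting that this ``budget-based'' inequality can actually be realized: one can exhibit a modification of at most $2(\ceil*{\log m}+1)\,t$ actual votes such that, in the new profile, $w$'s Copeland$^\alpha$ score is at most $s^\prime_{-t}(V, w)$ and $x_0$'s score is at least $s^\prime_{t}(V, x_0)$. Chained with the $RM$ inequality, this makes $x_0$ tie with or beat $w$, so $w$ loses unique winnership, yielding $\textsf{MOV}(\mathcal{E}) \le 2(\ceil*{\log m}+1)\,t < \epsilon n$.

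To construct the modification, I would use a binary-grouping scheme. Partition the at most $m-2$ candidates in $\mathcal{C}\setminus\{w, x_0\}$ into $\ceil*{\log m}$ layers according to the bits of their index. For each layer and each direction ($w$'s margins down, $x_0$'s margins up), perform at most $2t$ carefully chosen vote rewrites that simultaneously shift every in-layer pairwise margin involving $w$ or $x_0$ by the required $2t$, without unraveling progress made on previously processed layers. A final batch of $2t$ rewrites handles the direct edge $(w, x_0)$, yielding the factor $\ceil*{\log m}+1$.

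The main obstacle is the per-layer construction: each vote rewrite must advance many pairwise margins in the correct direction simultaneously, which requires exploiting the freedom to pick both an appropriate old vote from the profile and a new linear order whose relative ordering within the targeted layer is inverted, while keeping margins outside the layer essentially unchanged. This is where the $\log m$ overhead arises, and it is known to be unavoidable in general for Copeland$^\alpha$. Once the realization lemma is in hand, everything else is routine bookkeeping from the definitions of $s^\prime_{\pm t}$ and the Copeland$^\alpha$ scoring rule.
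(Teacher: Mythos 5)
Your approach is essentially the paper's: the authors prove this lemma simply by citing Theorem 11 of \citep{xia2012computing}, which is exactly the bound $\textsf{MOV} \le 2\left(\ceil*{\log m}+1\right)\, RM(w,x)$ that you invoke in contrapositive form, so recognizing the statement as Xia's relative-margin characterization and appealing to it is precisely what the paper does. One caveat: your attempted reconstruction of Xia's realization argument (the index-bit layering with at most $2t$ rewrites per layer, preserving progress on earlier layers) is only a plan—you yourself flag the per-layer construction as the unresolved obstacle—so it does not stand as a self-contained proof of the realization lemma; it is adequate only insofar as you, like the paper, ultimately lean on the citation rather than on that sketch.
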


\begin{proof}
 Follows from Theorem 11 in \citep{xia2012computing}.
\end{proof}

\begin{theorem}\label{thm:copeland}
 There is a $(\epsilon, \delta)$-winner determination algorithm for Copeland$^\alpha$ voting rule with sample complexity $O(\frac{ \log^3 \frac{m}{\delta}}{\epsilon^2})$.
\end{theorem}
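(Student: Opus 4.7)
The plan is to adapt the estimation-based strategy used for the maximin voting rule in Theorem~\ref{thm:maximin}, but demand a sharper precision on the pairwise margin estimates, dictated by the relative margin of victory bound of Lemma~\ref{lem:copeland}. Concretely, I would sample $\ell$ votes uniformly at random with replacement and, for every ordered pair of candidates $(x,y)$, form the estimator $\hat D(x,y) = \frac{n}{\ell}\sum_{i=1}^{\ell} X_i^{(x,y)}$, where $X_i^{(x,y)} \in \{-1,+1\}$ records whether $x \succ y$ in the $i$-th sampled vote. These are the same $\pm 1$ indicators used in the proof of Theorem~\ref{thm:maximin}, so $\mathbb{E}[\hat D(x,y)] = D(x,y)$.

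Next, I would set the target accuracy to $\tau := \frac{\epsilon n}{4(\ceil*{\log m}+1)}$, i.e., half of the lower bound on $RM(w,x)$ guaranteed by Lemma~\ref{lem:copeland}. Applying the additive Chernoff bound (Theorem~\ref{thm:chernoff}) to the rescaled $\{0,1\}$-variables $(X_i^{(x,y)}+1)/2$ yields
\[
\Pr\!\bigl[\,|\hat D(x,y)-D(x,y)|>\tau\,\bigr] \le 2\exp\!\Bigl(-\Omega\!\Bigl(\tfrac{\ell\,\epsilon^2}{(\log m)^2}\Bigr)\Bigr).
\]
A union bound over the $\binom{m}{2}$ unordered pairs then forces the choice $\ell = O\!\bigl(\frac{(\log m)^2\,\log(m/\delta)}{\epsilon^2}\bigr) = O\!\bigl(\frac{\log^3(m/\delta)}{\epsilon^2}\bigr)$, which is exactly the claimed sample complexity.

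Finally, I would argue that when all the estimates satisfy $|\hat D(x,y)-D(x,y)|<\tau$ simultaneously, the Copeland$^\alpha$ winner of the sampled election coincides with $w$. The key step here is to translate the accuracy guarantee into a comparison of the $s'_t(\cdot,\cdot)$ functionals from Xia's framework: for any non-winner $x$, the event $|\hat D-D|<\tau$ on all pairs implies that the Copeland$^\alpha$ score of $w$ in the sampled election is at least $s'_{-t}(V,w)$ and that of $x$ is at most $s'_{t}(V,x)$ for $t < RM(w,x)$, so that by the very definition of $RM(w,x)$ and Lemma~\ref{lem:copeland} the sampled score of $w$ strictly exceeds that of $x$. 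This comparison step, tying $\tau$ to $RM(w,x)/2$ and working through the definition of $s'_t$ carefully (in particular, handling the $\alpha$-weighted tie terms), is the main obstacle; the sampling and concentration part is otherwise routine.
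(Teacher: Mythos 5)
Your proposal is correct and follows essentially the same route as the paper's proof: sample votes, estimate every pairwise margin $D(x,y)$ to accuracy $\Theta\left(\frac{\epsilon n}{\log m}\right)$ with the maximin-style $\pm 1$ estimator plus a union bound over pairs (giving $O\left(\frac{\log^3 (m/\delta)}{\epsilon^2}\right)$ samples), and then invoke Lemma~\ref{lem:copeland} (Xia's relative margin of victory bound) to conclude that the sampled Copeland$^\alpha$ winner is the true winner. The only differences are an immaterial constant in the accuracy radius ($\epsilon n/4(\lceil\log m\rceil+1)$ versus the paper's $\epsilon n/5(\lceil\log m\rceil+1)$) and that you spell out the $s'_t$-based comparison step which the paper leaves implicit as "analogous to Theorem~\ref{thm:gen}."
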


\begin{proof}
 Let $x$ and $y$ be any two arbitrary candidates and $w$ the Copeland$^\alpha$ winner of the election. We estimate $D(x,y)$ within the closed ball of radius $\frac{\epsilon n}{5(\ceil*{\log m} +1)}$ around $D(x,y)$ for every candidates $x, y\in \mathcal{C}$ in a way analogous to the proof of \Cref{thm:maximin}. This needs $O(\frac{ \log^3 \frac{m}{\delta}}{\epsilon^2})$ many samples. The rest of the proof follows from \Cref{lem:copeland} by an argument analogous to the proof of \Cref{thm:gen}. \qed
\end{proof}

\subsection{Bucklin Voting Rule}

For the Bucklin voting rule, we will estimate how many times each
candidate occurs within the first 
$k$ position for every $k\in [m]$. This eventually leads us to predict the winner of the election due to the following lemma.

\begin{lemma}\label{lem:bucklin}
 Suppose \textsf{MOV} of a Bucklin election be at least $\epsilon n$. Let $w$ be the winner of the election and $x$ be any 
 arbitrary candidate other than $w$. Suppose
 $$ b_w = \min_i \{ i : w \text{ is within top i places in at least } \frac{n}{2} + \frac{\epsilon n}{3} \text{ votes} \} $$
 $$ b_x = \min_i \{ i : x \text{ is within top i places in at least } \frac{n}{2} - \frac{\epsilon n}{3} \text{ votes} \} $$
 Then, $ b_w < b_x $.
\end{lemma}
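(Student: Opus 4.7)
The plan is to proceed by contradiction: assume $b_w \ge b_x$, and construct strictly fewer than $\epsilon n$ vote modifications after which $w$ is no longer the unique winner, contradicting $\textsf{MOV} \ge \epsilon n$. Let $k = b_x$ and write $T_c^{(l)}$ for the number of votes that rank candidate $c$ within the top $l$ positions. By the definitions of $b_x$ and $b_w$, one then has $T_x^{(k)} \ge n/2 - \epsilon n/3$ and, since $b_w \ge k$, also $T_w^{(k-1)} < n/2 + \epsilon n/3$.

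The basic modification primitive will be the $(w,x)$-swap: pick a vote in which $w$ currently occupies some position at most $k-1$ and $x$ occupies some position strictly greater than $k$, and exchange the positions of $w$ and $x$ in that vote. A single such swap raises both $T_x^{(k)}$ and $T_x^{(k-1)}$ by one while lowering both $T_w^{(k-1)}$ and $T_w^{(k)}$ by one. I plan to perform $\lceil \epsilon n / 3 \rceil + 1$ such swaps; the resulting counts satisfy $T_x^{(k)} > n/2$ and $T_w^{(k-1)} < n/2$, so $x$'s new Bucklin score is at most $k$ while $w$'s new Bucklin score is at least $k$. Feasibility of the swaps (i.e.\ that enough votes of the prescribed form exist) follows from inclusion-exclusion on $T_w^{(k-1)}$, $n - T_x^{(k)}$, and the $w$-$x$ overlap inside the top $k$; in the rare regimes where the prescribed swap is not available, the overlap must be large and one can substitute an equivalent swap using an auxiliary candidate without changing the counts of interest.

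The main obstacle is that these inequalities establish only \emph{weak} Bucklin dominance of $x$ over $w$, so a tie-breaking rule could still keep $w$ as the unique winner when both scores equal $k$. To upgrade the dominance to strict, I would split into two sub-cases according to $b_w$. If $b_w > k$, then the definition of $b_w$ also forces $T_w^{(k)} < n/2 + \epsilon n/3$, so the very same $\lceil \epsilon n / 3 \rceil + 1$ swaps automatically drive $T_w^{(k)}$ below $n/2$ and make $w$'s Bucklin score at least $k+1$, yielding a strict win for $x$. The harder sub-case is $b_w = k$, where $T_w^{(k)} \ge n/2 + \epsilon n/3$; here I plan to supplement the swaps with a controlled number of further demotions of $w$ out of the top $k$ (each implemented by swapping $w$ with some candidate currently ranked below $k$ in the same vote), and then argue that otherwise a level-$k$ application of the MOV hypothesis forces an internal contradiction between $T_w^{(k)}$ and $T_x^{(k)}$. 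This bounds the required number of extra demotions and keeps the total modification count strictly below $\epsilon n$, closing out the contradiction and establishing $b_w < b_x$.
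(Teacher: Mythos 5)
Your first stage is essentially the paper's own argument and is sound: assuming $b_w \ge b_x = k$, about $\epsilon n/3 + 1$ modifications push $T_x^{(k)}$ above $n/2$ and $T_w^{(k-1)}$ down to at most $n/2$, so after fewer than $\epsilon n$ changes $x$'s Bucklin score is at most $k$ while $w$'s is at least $k$. The paper does this a little more simply, using two separate batches of $\epsilon n/3$ votes (demote $w$ to the bottom in votes where it sits in the top $b_w-1$ positions; promote $x$ to the top in votes where it lies outside the top $b_x$), which avoids your feasibility worry about finding votes with $w$ high and $x$ low simultaneously; your ``auxiliary candidate'' patch is needed only because you insist both effects happen in the same vote (e.g.\ when $T_w^{(k-1)}$ is tiny, or $k=1$), and it is left vague. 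At this point the paper simply stops: $w$ is no longer the unique winner, and that is the sense in which $\textsf{MOV}\ge\epsilon n$ is contradicted throughout the paper (compare the proof of the approval lemma, which argues the same way).

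The genuine gap is your ``hard sub-case'' $b_w=k$, where you try to upgrade to a strict win for $x$. The number of extra demotions you would need is $T_w^{(k)}-n/2$, and nothing in the \textsf{MOV} hypothesis bounds this by $O(\epsilon n)$: $T_w^{(k)}$ can equal $n$. Concretely, take $k=2$ and many candidates; let $w$ be ranked first in $n/2+1$ votes and second in the remaining $n/2-1$, let $x$ be second in the first batch and third in the second batch, and spread the remaining $n/2-1$ first places over at least four other candidates. Then $b_w=b_x=2$ and $T_w^{(2)}=n$, yet with a fixed tie-breaking rule favouring $w$ no strict change of winner is achievable with fewer than roughly $3n/8$ vote changes, so this election has large margin of victory in the strict sense while exhibiting exactly the configuration you hope to rule out. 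Hence the promised ``internal contradiction between $T_w^{(k)}$ and $T_x^{(k)}$ forced by a level-$k$ application of the MOV hypothesis'' does not exist, and the sub-case cannot be closed as planned. The correct exit is not strictness but the weaker contradiction the paper uses: fewer than $\epsilon n$ changes already leave $w$ at best tied with $x$, i.e.\ not the unique winner, which is all the MOV-based lemmas in this paper require; if one insists on a strict winner change under an adversarial fixed tie-breaking rule, the boundary case $b_w=b_x$ genuinely cannot be excluded, as the example shows.
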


\begin{proof}
 We prove it by contradiction. So, assume $ b_w \ge b_x $. Now by changing $\frac{\epsilon n}{3}$ votes, 
 we can make the Bucklin score of $w$ to be at least $b_w$. By changing another $\frac{\epsilon n}{3}$ votes, we can make the Bucklin score of $x$ to be at most $b_x$. Hence, by changing $\frac{2 \epsilon n}{3}$ votes, it is possible not to make $w$ the unique winner which contradicts the fact that the \textsf{MOV} is at least $\epsilon n$.\qed
\end{proof}

\begin{theorem}\label{thm:bucklin}
 There is a $(\epsilon, \delta)$-winner determination algorithm for Bucklin voting rule with sample complexity $O(\frac{\log \frac{m}{\delta}}{\epsilon^2})$.
\end{theorem}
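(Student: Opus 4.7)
The plan is to mimic the template used for \Cref{thm:app,thm:scr,thm:maximin}: draw $\ell = O\!\left(\tfrac{\log(m/\delta)}{\eps^2}\right)$ votes uniformly at random with replacement, estimate a small collection of statistics to within an additive $\eps/6$, and then invoke \Cref{lem:bucklin} to conclude that the Bucklin winner of the sample agrees with the true winner $w$.

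Concretely, for every candidate $x\in\mathcal{C}$ and every rank $k\in[m]$, let $p_k(x)$ be the true fraction of votes that place $x$ among their top $k$ positions and let $\hat p_k(x)$ be the corresponding empirical fraction from the sample. The additive Chernoff bound in \Cref{thm:chernoff} gives
\[
\Pr\bigl[\,|\hat p_k(x) - p_k(x)| > \eps/6\,\bigr] \;\le\; 2\exp(-\eps^2\ell/18),
\]
and a union bound over the $m^2$ pairs $(x,k)$ shows that, for a suitable choice of the constant hidden in $\ell$, with probability at least $1-\delta$ every estimate $\hat p_k(x)$ is simultaneously within $\eps/6$ of $p_k(x)$. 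This is where the $\log(m/\delta)$ factor comes from, and it is the only place in the argument where the dependence on $m$ enters.

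Condition on this good event. The algorithm outputs the candidate whose Bucklin score, computed from $\{\hat p_k(\cdot)\}$ with threshold $1/2$, is smallest. By the definition of $b_w$ in \Cref{lem:bucklin}, we have $p_{b_w}(w)\ge 1/2 + \eps/3$, so $\hat p_{b_w}(w)\ge 1/2 + \eps/6 > 1/2$; hence the sampled Bucklin score of $w$ is at most $b_w$. Conversely, for any $x\ne w$ and any $k<b_x$, the definition of $b_x$ yields $p_k(x) < 1/2 - \eps/3$, so $\hat p_k(x) < 1/2 - \eps/6 < 1/2$; hence the sampled Bucklin score of $x$ is at least $b_x$. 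Since $b_w < b_x$ by \Cref{lem:bucklin}, $w$ strictly dominates every other candidate on the sample and is correctly returned.

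I do not anticipate a real obstacle here: the structural content lies entirely in \Cref{lem:bucklin}, which already converts the $\textsf{MOV}\ge\eps n$ promise into a two-sided gap ($\eps n/3$ on each side of the majority threshold) between $w$ and every challenger. The remaining work is purely sample-estimation bookkeeping — the same Chernoff-plus-union-bound pattern employed in the earlier theorems, with the slack $\eps/6$ chosen so that the estimation error cannot consume either side of that gap.
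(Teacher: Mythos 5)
Your proposal is correct and takes essentially the same route as the paper's own proof: sample $O\!\left(\frac{\log(m/\delta)}{\eps^2}\right)$ votes, estimate the top-$k$ frequencies of every candidate for every $k\in[m]$ via the additive Chernoff bound and a union bound over the $m^2$ pairs, and then invoke \Cref{lem:bucklin} to conclude the sampled Bucklin winner is $w$. If anything, your accuracy parameter $\eps/6$ is chosen more carefully than the paper's stated estimation radius of $\eps n/2$, which is looser than the $\eps n/3$ gap that \Cref{lem:bucklin} actually provides.
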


\begin{proof}
 Let $x$ be any arbitrary candidate and $1\le k\le m$. We sample $l$ votes uniformly at random from the set of votes with replacement. Let $X_i$  be a random variable defined as follows.
 $$ X_i = \begin{cases}
           1,& \text{if } x \text{ is within top } k \text{ places in } i^{th} \text{ sample}\\
           0,& \text{else}
          \end{cases}
 $$
 Let $\hat{s}_k(x)$ be the estimate of the number of times the candidate $x$ has been placed within top $k$ positions. 
 That is, $\hat{s}_k(x) = \frac{n}{l} \sum_{i=1}^{l} X_i$. Let
 $s_k(x)$ be the number of times the candidate $x$ been placed in top
 $k$ positions. Clearly, $E[\hat{s}_k(x)] = \frac{n}{\ell} \sum_{i=1}^{\ell} E[X_i] = s_k(x) $.
 We estimate $\hat{s}_k(x)$ within the closed ball of radius $\epsilon
 n/2$ around $s_k(x)$ for every candidate $x \in \mathcal{C}$ and
 every integer $k\in [m]$,  and the rest of the proof follows from  by
 an argument analogous to the proof of \Cref{thm:app} using \Cref{lem:bucklin}.\qed
\end{proof}

\subsection{Plurality with Runoff Voting Rule}

Now, we move on to the plurality with runoff voting rule. In this case, we first estimate the plurality score of each of the candidates. In the next round, we estimate the pairwise margin of victory of the two candidates that qualifies to the second round.

\begin{lemma}\label{lem:runoff}
 Suppose $ \textsf{MOV} \ge \epsilon n $, and $w$ and $r$ be the winner and runner up of a plurality with runoff election respectively, and $x$ be any arbitrary candidate other than and $r$. Then, following holds. Let $s(.)$ denote plurality score of candidates. Then following holds.
 \begin{enumerate}
  \item $D(w,r) > 2 \epsilon n$.
  \item For every candidate $x \in \mathcal{C}\setminus\{w,r\}$, $ 2 s(w) > s(x) + s(r) +\epsilon n$.
  \item If $s(x) > s(r) - \frac{\epsilon n}{2}$, then $D(w,x) > \frac{\epsilon n}{2}$.
 \end{enumerate}
\end{lemma}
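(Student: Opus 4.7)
The plan is to prove each of the three inequalities by contradiction: in each case, assuming the negation, I will exhibit fewer than $\epsilon n$ vote modifications that change the unique election winner, contradicting the hypothesis $\textsf{MOV} \geq \epsilon n$.

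For Part 1, assume $D(w,r) \le 2\epsilon n$. Since $w$ beats $r$ in the runoff, there are $(n + D(w,r))/2 \ge n/2$ votes ranking $w \succ r$; I will swap the positions of $w$ and $r$ in any $\epsilon n$ of these votes. Each swap alters $D(w,r)$ by exactly $-2$, so the new pairwise margin becomes $\leq 0$. Only the plurality scores of $w$ and $r$ can change (and only when one of them is at the top of the swapped vote); other candidates' scores are untouched. In the modified election, either $w$ is still in the top two, in which case $r$ now beats $w$ in the runoff, or $w$ has dropped out of the top two and is trivially not the winner. Either way $w$ loses, giving a contradiction.

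For Part 2, write $\delta_x = s(w) - s(x) \ge 0$ and $\delta_r = s(w) - s(r) \ge 0$; the assumption $2s(w) \le s(x) + s(r) + \epsilon n$ becomes $\delta_x + \delta_r \le \epsilon n$. I will promote $x$ to the top in $\lceil \delta_x/2 \rceil$ votes currently topped by $w$, and independently promote $r$ to the top in $\lceil \delta_r/2 \rceil$ further such votes. There are enough $w$-topped votes since $s(w) \ge \max(\delta_x, \delta_r) \ge (\delta_x + \delta_r)/2$. A short arithmetic check gives $s_{\mathrm{new}}(x) \ge s_{\mathrm{new}}(w)$ and $s_{\mathrm{new}}(r) \ge s_{\mathrm{new}}(w)$, so $w$ drops out of the top two and cannot win; the total number of changes is at most $(\delta_x + \delta_r)/2 + 1 \le \epsilon n /2 + 1 < \epsilon n$, with perhaps one extra change to break a tie.

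For Part 3, assume $s(x) > s(r) - \epsilon n/2$ and $D(w,x) \le \epsilon n/2$. First I will change the top candidate from $r$ to $x$ in roughly $(s(r) - s(x))/2 < \epsilon n/4$ votes currently topped by $r$, which installs $x$ in the top two in place of $r$. Next, I will swap $w$ and $x$ in roughly $D(w,x)/2 \le \epsilon n/4$ further votes where $w \succ x$, flipping the pairwise margin in $x$'s favor. The total cost is below $\epsilon n$, and afterwards either the runoff is $w$ versus $x$ (now won by $x$), or $w$ has dropped out of the top two; in either case, $w$ is no longer the winner. The main obstacle across all three parts is the bookkeeping of disjointness and availability of the required vote types ($w$-topped votes, votes with $w \succ x$, etc.), but each required quantity is easily dominated by $s(w)$ or by the relevant pairwise count, both of which are $\Omega(\epsilon n)$ or larger by virtue of $w$ being the winner.
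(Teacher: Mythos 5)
Your treatment of parts 1 and 2 is correct and is essentially the paper's own argument (the paper only sketches it): swapping $w$ and $r$ in $\epsilon n$ votes with $w \succ r$ kills the pairwise margin while leaving every other candidate's plurality score untouched, and promoting $x$ and $r$ to the top of roughly $(\delta_x+\delta_r)/2 \le \epsilon n/2$ votes topped by $w$ pushes $w$ out of the top two; the tie-breaking slack you flag is glossed over in the paper as well.

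Part 3, however, has a genuine gap. Transferring about $(s(r)-s(x))/2$ votes from $r$ to $x$ only guarantees that $x$'s new score is at least $r$'s new score; it does \emph{not} ``install $x$ in the top two.'' A candidate $y \notin \{w,r,x\}$ is constrained only by $s(y) \le \min(s(w),s(r))$, so $s(y)$ can be essentially $s(r)$, whereas your boosted score for $x$ is about $(s(x)+s(r))/2$, which can fall short of $s(y)$ by nearly $\epsilon n/4$ since $s(x)$ may be as low as $s(r)-\epsilon n/2+1$ (e.g., $s(w)=30$, $s(r)=28$, $s(y)=27$, $s(x)=20$, $\epsilon n=20$: after moving $4$ votes, $x$ has $24$ and the second round is $w$ versus $y$). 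In that case your subsequent $w$--$x$ swaps give no control over the actual runoff (they need not change $D(w,y)$), $w$ can remain the winner, and no contradiction with $\textsf{MOV}\ge\epsilon n$ is reached; your closing dichotomy ``runoff is $w$ versus $x$, or $w$ dropped out of the top two'' omits exactly this case, and the availability/disjointness bookkeeping you worry about is not the real obstacle. The repair is to boost $x$ by the full $\epsilon n/2$ rather than by half the gap to $r$: moving $x$ to the top of $\lceil\epsilon n/2\rceil$ votes not already topped by $x$ gives $x$ a score exceeding $s(r)$, hence exceeding every candidate other than possibly $w$, so $x$ reaches the second round (unless $w$ has already fallen out of the top two, in which case the winner has changed anyway); then about $\epsilon n/4$ further $w$--$x$ swaps drive $D(w,x)\le 0$, for a total still below $\epsilon n$. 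This is in effect what the paper's terse ``$\epsilon n/2$ plus $\epsilon n/2$'' budget is doing.
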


\begin{proof}
 If the first property does not hold, then by changing $\epsilon n$ votes, we can make $r$ winner.
 If the second property does not hold, then by changing $\epsilon n$ votes, we can make both $x$ and $r$ qualify to the second round.
 If the third property does not hold, then by changing $\frac{\epsilon n}{2}$ votes, the candidate $x$ can be sent to the second round of the runoff election. By changing another $\frac{\epsilon n}{2}$ votes, $x$ can be made to win the election. This contradicts the \textsf{MOV} assumption. \qed
\end{proof}

\begin{theorem}\label{thm:runoff}
 There is a $(\epsilon, \delta)$-winner determination algorithm for the plurality with runoff voting rule with sample complexity $O(\frac{\log \frac{1}{\delta}}{\epsilon^2})$.
\end{theorem}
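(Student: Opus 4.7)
The plan is to use a two-phase sampling algorithm mirroring the two rounds of the plurality with runoff rule, and then argue that each phase only needs $O(\log(1/\delta)/\epsilon^2)$ samples, with the first phase handled essentially by the plurality case ($k=1$) of \Cref{thm:kapp} and the second phase handled by a single additive Chernoff bound. The three properties of \Cref{lem:runoff} are exactly designed to make both phases robust to $O(\epsilon n)$-sized errors.

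\textbf{Phase 1 (estimating plurality scores).} I would draw $\ell_1 = O(\log(1/\delta)/\epsilon^2)$ votes uniformly with replacement and, for each candidate $x$, form $\hat{s}(x) = (n/\ell_1)\sum_{i=1}^{\ell_1} X_i$, where $X_i$ is the indicator that $x$ is ranked first in the $i$-th sampled vote. Applying the multiplicative Chernoff bound exactly as in \Cref{thm:kapp} gives $\Pr[|\hat{s}(x)-s(x)| > (\epsilon/4) n] \le 2\exp(-\epsilon^2 \ell_1 n /(48\, s(x)))$ for each $x$. A union bound then demands controlling $\sum_{x\in\mathcal{C}} \exp(-\lambda/s(x))$ under the constraint $\sum_x s(x)=n$. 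By \Cref{lem:funmax}, this sum is maximized at an extreme point where a single candidate has score $n$; so the bound collapses to $2\exp(-\epsilon^2 \ell_1/48)$ and is at most $\delta/2$ for $\ell_1 = O(\log(1/\delta)/\epsilon^2)$, independent of $m$.

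\textbf{Selecting the estimated top two and applying \Cref{lem:runoff}.} Condition on the event $|\hat{s}(x)-s(x)| \le \epsilon n/4$ for all $x$. Let $\hat{w}, \hat{r}$ be the candidates with the two largest $\hat{s}$ values. I would show first that the true winner $w$ is always among $\{\hat{w},\hat{r}\}$. Suppose for contradiction there were two candidates $y_1,y_2 \ne w$ with $\hat{s}(y_i) > \hat{s}(w)$. Then $s(y_i) > s(w) - \epsilon n/2$ for $i=1,2$. If neither $y_i$ equals the true runner-up $r$, apply property~2 of \Cref{lem:runoff} to $y_1$: $2s(w) > s(y_1) + s(r) + \epsilon n$, which combined with $s(y_1) > s(w)-\epsilon n/2$ yields $s(w) > s(r) + \epsilon n/2$, but then $s(r) \ge s(y_1) > s(w) - \epsilon n/2$ (since $r$ is the true runner-up), a contradiction. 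If one of them, say $y_1$, equals $r$, apply property~2 to $y_2$: $2s(w) > s(y_2)+s(r)+\epsilon n > 2s(w) - \epsilon n + \epsilon n = 2s(w)$, again a contradiction. Hence $w \in \{\hat{w},\hat{r}\}$; call the other estimated finalist $z$. If $z = r$, property~1 gives $D(w,z) > 2\epsilon n$; if $z \ne r$, then $\hat{s}(z) \ge \hat{s}(r)$ forces $s(z) \ge s(r)-\epsilon n/2$, so property~3 gives $D(w,z) > \epsilon n/2$. In either case $D(w,z) > \epsilon n/2$.

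\textbf{Phase 2 (estimating the pairwise margin) and conclusion.} I would then sample an additional $\ell_2 = O(\log(1/\delta)/\epsilon^2)$ votes and form the natural estimator $\hat{D}(\hat{w},\hat{r}) = (n/\ell_2)\sum_i Y_i$, where $Y_i \in \{+1,-1\}$ records the relative order of $\hat{w}$ and $\hat{r}$. A single additive Chernoff bound (\Cref{thm:chernoff}) yields $\Pr[|\hat{D}(\hat{w},\hat{r}) - D(\hat{w},\hat{r})| > \epsilon n/4] \le 2\exp(-\epsilon^2 \ell_2/8) \le \delta/2$. Conditioned on both phase~1 and phase~2 succeeding (total failure probability $\le \delta$), the sign of $\hat{D}(w,z)$ matches the sign of $D(w,z)$, so the pairwise comparison in the second round correctly outputs $w$. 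The total sample complexity is $\ell_1 + \ell_2 = O(\log(1/\delta)/\epsilon^2)$.

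\textbf{Anticipated obstacle.} The only non-routine step is the case analysis establishing that $w$ lies in the estimated top two, since this requires combining property~2 of \Cref{lem:runoff} (a plurality-level statement) with the $\epsilon n/4$-accurate phase~1 estimates in a way that rules out both ``$r$ displaced'' and ``$r$ retained'' failure modes; the other steps are direct applications of the Chernoff/\Cref{lem:funmax} machinery already developed for \Cref{thm:kapp}.
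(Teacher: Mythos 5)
Your proposal is correct and takes essentially the same route as the paper: estimate the plurality scores with the multiplicative Chernoff bound plus \Cref{lem:funmax} to keep the bound independent of $m$, then estimate the single relevant pairwise margin and conclude via the three properties of \Cref{lem:runoff}. The only differences are cosmetic — you use a fresh second-phase sample where the paper reuses one sample, and you make explicit the case analysis (that $w$ is among the estimated finalists and $D(w,z) > \epsilon n/2$) that the paper leaves implicit; just tighten the phase-1 accuracy to, say, $\epsilon n/5$ (as the paper does) so that the hypothesis $s(z) > s(r) - \epsilon n/2$ of property 3 holds strictly.
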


\begin{proof}
 Let $x$ be any arbitrary candidate. We sample $l$ votes uniformly at random from the set of votes with replacement. Let, $X_i$ be a random variable defined as follows.
 $$ X_i = \begin{cases}
           1,& \text{if } x \text{ is at first position in the } i^{th} \text{ sample}\\
           0,& \text{else}
          \end{cases}
 $$
 The estimate of the plurality score of $x$ be $\hat{s}(x)$. Then $\hat{s}(x) = \frac{n}{l}\sum_{i=1}^{l}X_i$. Let $s(x)$ be the actual plurality score of $x$. Then we have following,
 $$ E[X_i] = \frac{s(x)}{n}, E[ \hat{s}(x) ] = \frac{n}{l} \sum_{i=1}^{l} E[ X_i ] = s(x)$$
 By Chernoff bound, we have the following,
 $$ \Pr[ |\hat{s}(x) - s(x)| > \epsilon n ] \le \frac{2}{\exp\{\epsilon^2 l n/3s(x)\}}$$
 By union bound, we have the following,
 \begin{eqnarray*}
  \Pr[ \exists x\in \mathcal{C}, |\hat{s}(x) - s(x)| > \epsilon n ] &\le& \sum_{x\in \mathcal{C}} \frac{2}{\exp\{\epsilon^2 ln/3s(x)\}}\\
 &\le& \frac{2}{\exp\{\epsilon^2 l/3\}}
 \end{eqnarray*}
 The last line follows from \Cref{lem:funmax}. Notice that, we do not need the random variables $\hat{s}(x)$ and $\hat{s}(y)$ to be independent for any two candidates $x$ and $y$. Hence, we can use the same $l$ sample votes to estimate $\hat{s}(x)$ for every candidate $x$.
 
 Now, let $y$ and $z$ be the two candidates that go to the second round.
 $$ Y_i = \begin{cases}
           1,& \text{if } y\succ z \text{ in the } i^{th} \text{ sample}\\
           -1,& \text{else}
          \end{cases}
 $$
 The estimate of $D(y,z)$ be $\hat{D}(y,z)$. Then $\hat{D}(y,z) = \frac{n}{l}\sum_{i=1}^{l}Y_i$. Then we have following,
 $$ E[Y_i] = \frac{D(y,z)}{n}, E[ \hat{D}(y,z) ] = \frac{n}{l} \sum_{i=1}^{l} E[ Y_i ] = D(y,z)$$
 By Chernoff bound, we have the following,
 $$ \Pr[ |\hat{D}(y,z) - D(y,z)| > \epsilon n ] \le \frac{2}{\exp\{\epsilon^2 l/3\}}$$
 Let $A$ be the event that $\forall x\in \mathcal{C}, |\hat{s}(x) - s(x)| \le \epsilon n$ and $ |\hat{D}(y,z) - D(y,z)| \le \epsilon n$. Now we have,
 $$\Pr[A ] \ge 1 - (\frac{2}{\exp\{\epsilon^2 l/3\}} + \frac{2}{\exp\{\epsilon^2 l/3\}})$$
 Since we do not need independence among the random variables $\hat{s}(a)$, $\hat{s}(b)$, $\hat{D}(w,x)$, $\hat{D}(y,z)$ for any candidates $a, b, w, x, y,$ and $z$, we can use the same $l$ sampled votes. Now, from \Cref{lem:runoff}, if $|\hat{s}(x) - s(x)| \le \frac{\epsilon n}{5}$ for every candidate $x$ and $|\hat{D}(y,z) - D(y,z)| \le \frac{\epsilon n}{5}$ for every candidates $y$ and $z$, then the plurality with runoff winner of the sampled votes coincides with the actual runoff winner. The above event happens with probability at least $1-\delta$ by choosing an appropriate $ l = O(\frac{\log \frac{1}{\delta}}{\epsilon^2})$.\qed
\end{proof}

\subsection{STV Voting Rule}

Now we move on the STV voting rule. The 
following lemma provides an upper bound on the number of votes that need to be changed to make some arbitrary candidate win the election. 
More specifically, given a sequence of $m$ candidates $\{x_i\}_{i=1}^m$ with $x_m$ not being the winner, the lemma below proves an upper bound on the number of number of votes that need to be modified such that the candidate $x_i$ gets eliminated at the $i^{th}$ round in the STV voting rule.

\begin{lemma}\label{lem:stv}
 Suppose $\mathcal{V}$ be a set of votes and $w$ be the winner of a STV election. Consider the following chain with candidates $x_1\ne x_2\ne \ldots \ne x_m$ and $ x_m \ne w $.
 $$ \mathcal{C}\supset \mathcal{C}\setminus\{x_1\}\supset \mathcal{C}\setminus\{x_1,x_2\}\supset \ldots \supset\{x_m\} $$
 Let $s_{\mathcal{V}}(A,x)$ be the plurality score of a candidate $x$ when all the votes in $\mathcal{V}$ are restricted to the set of candidates $A\subset \mathcal{C}$. Let us define $\mathcal{C}_{-i} = \mathcal{C}\setminus \{x_1, \ldots, x_i\}$ and $s^*_{\mathcal{V}}(A) := \min_{x\in A} \{s_{\mathcal{V}}(A,x)\}$. Then, we have the following.
 $$ \sum_{i=0}^{m-1} \left(s_{\mathcal{V}}\left({\mathcal{C}_{-i}},
     x_{i+1}\right) -
   s^*_{\mathcal{V}}\left({\mathcal{C}_{-i}}\right)\right) \ge
 \textsf{MOV}$$
\end{lemma}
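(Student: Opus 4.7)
Write $g_j := s_{\mathcal{V}}(\mathcal{C}_{-(j-1)}, x_j) - s^*_{\mathcal{V}}(\mathcal{C}_{-(j-1)})$ for $j = 1, \ldots, m$ (note $g_m = 0$). The goal is to exhibit an explicit modification of at most $\sum_{j=1}^{m-1} g_j$ votes after which STV eliminates $x_1, x_2, \ldots, x_{m-1}$ in exactly that order and therefore returns $x_m$ as the winner; since $x_m \ne w$ this immediately gives $\textsf{MOV} \le \sum_{j=1}^{m-1} g_j$, which is what the lemma asserts (the $i=m-1$ term of the stated sum is zero).

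The construction proceeds in $m-1$ phases. At the start of phase $j$, we maintain the invariant that in the current profile $\mathcal{V}^{(j-1)}$, the first $j-1$ rounds of STV eliminate $x_1, \ldots, x_{j-1}$ in order. In phase $j$ we then pick any $g_j$ votes $v$ in $\mathcal{V}^{(j-1)}$ with $t_{j-1}(v) = x_j$ (i.e.\ whose top candidate in $\mathcal{C}_{-(j-1)}$ is still $x_j$) and, in each such vote, swap the positions of $x_j$ and $y_j$, where $y_j \in \mathcal{C}_{-(j-1)} \setminus \{x_j\}$ is a candidate achieving $s^*_{\mathcal{V}}(\mathcal{C}_{-(j-1)})$. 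Each swap lowers the plurality of $x_j$ in $\mathcal{C}_{-(j-1)}$ by one and raises that of $y_j$ by one; after $g_j$ such swaps the original gap $g_j$ is reversed, so $x_j$ becomes a minimizer in $\mathcal{C}_{-(j-1)}$ and, with an appropriate tie-break, is eliminated at round $j$. Iterating across all $m-1$ phases yields the desired chain of eliminations, and the total number of modified votes is exactly $\sum_{j=1}^{m-1} g_j$.

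The main obstacle is that a swap in phase $j$ can also perturb the plurality scores at earlier rounds $r < j$, potentially breaking the invariant. The structural fact that rescues the argument is this: since $t_{j-1}(v) = x_j$, every candidate appearing above $x_j$ in $v$ must lie in $\{x_1, \ldots, x_{j-1}\}$. Consequently, for $r < j$, the swap can change $t_{r-1}(v)$ only in the case when $t_{r-1}(v) = x_j$, in which case it becomes $t_{r-1}(v') = y_j$; since $y_j \in \mathcal{C}_{-(j-1)}$, we have $y_j \ne x_r$, so the swap merely reshuffles mass between two candidates distinct from $x_r$ and leaves the plurality score of $x_r$ itself untouched at round $r$. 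The delicate piece of bookkeeping, which I expect to be the hard part, is verifying that the cumulative decreases in $x_j$'s score and increases in $y_j$'s score at the earlier rounds $r < j$ never unseat $x_r$ as the round-$r$ minimizer; this is done by tracking the effect of all phases simultaneously and using the chain structure together with the choice of $y_j$ as the minimizer in $\mathcal{C}_{-(j-1)}$ to show the invariant from phase $r$ is preserved throughout all subsequent phases.
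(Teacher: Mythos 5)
Your overall plan (exhibit at most $\sum_j g_j$ vote changes after which STV elects $x_m \ne w$, hence $\textsf{MOV} \le \sum_j g_j$) is the same as the paper's, but your choice of swap partner creates a genuine gap. You swap $x_j$ with a \emph{globally chosen} round-$j$ minimizer $y_j$. Since $\mathcal{C}_{-(j-1)} \setminus \{x_j\} = \{x_{j+1},\ldots,x_m\}$, this $y_j$ is necessarily a \emph{later chain candidate}, and each swap changes the restriction of the modified vote to $\mathcal{C}_{-j}$: after deleting $x_j$, the vote's top becomes $y_j$ instead of whichever candidate was previously second among $\mathcal{C}_{-(j-1)}$. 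So the plurality scores at rounds $j+1,\ldots$ in the modified profile no longer agree with those of $\mathcal{V}$: $y_j$'s later-round scores can be inflated by up to $g_j$ (e.g.\ if $y_1=x_2$, phase $1$ can raise $x_2$'s round-$2$ score by $g_1$), and the deflated former top candidates can lower $s^*$ at later rounds. Either effect makes the true gap to be closed at a later phase exceed the $g_i$ you computed from the \emph{original} profile, so allocating exactly $g_i$ swaps there does not make $x_i$ a minimizer and the claimed total of $\sum_j g_j$ is not justified. Your bookkeeping paragraph only addresses rounds $r<j$, and even there the observation is insufficient: preserving $x_r$'s own round-$r$ score is not enough, because your swaps lower $x_j$'s round-$r$ score, which can drop strictly below $x_r$'s, so $x_r$ need no longer be the round-$r$ minimizer.

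The paper avoids all of this with a more local modification: in each selected vote it swaps $x_j$ with the candidate \emph{immediately following it in the restriction to} $\mathcal{C}_{-(j-1)}$ (a vote-dependent candidate, not a global minimizer). Then deleting $x_j$ leaves each modified vote's restriction to $\mathcal{C}_{-j}$ --- and hence to every later candidate set --- literally unchanged, so all later-round restricted scores coincide with those of the original $\mathcal{V}$ and the per-phase budgets $g_i$ remain exact, making the total telescope to the claimed sum. If you replace your global $y_j$ by this adjacent-candidate demotion, your phase structure and counting go through essentially as in the paper. (The reliance on favourable tie-breaking when $x_j$ is only brought down to a tie with the old minimum is a looseness your argument shares with the paper's own proof, so I do not count it against you.)
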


\begin{proof}
 We will show that by changing $ \sum_{i=0}^{m-1} \left(s_{\mathcal{V}}\left({\mathcal{C}_{-i}}, x_{i+1}\right) - s^*_{\mathcal{V}}\left({\mathcal{C}_{-i}}\right)\right) $ votes, we can make the candidate $x_m$ winner. If $x_1$ minimizes $s_{\mathcal{V}}(\mathcal{C},x)$ over $x\in \mathcal{C}$, then we do not change anything and define $\mathcal{V}_1 = \mathcal{V}$. Otherwise, there exist $s_{\mathcal{V}}(\mathcal{C},x_1) - s^*_{\mathcal{V}}(\mathcal{C})$ many votes of following type.
 $$ x_1\succ a_1\succ a_2\succ \ldots \succ a_{m-1}, a_i\in \mathcal{C}, \forall 1\le i\le m-1 $$
 We replace $s_{\mathcal{V}}(\mathcal{C},x_1) - s^*_{\mathcal{V}}(\mathcal{C})$ many votes of the above type by the votes as follows.
 $$ a_1\succ x_1\succ a_2\succ \ldots \succ a_{m-1} $$
 Let us call the new set of votes by $\mathcal{V}_1$. We claim that, $s_{\mathcal{V}}(\mathcal{C}\setminus {x_1}, x) = s_{\mathcal{V}_1}(\mathcal{C}\setminus {x_1}, x)$ for every candidate $x\in \mathcal{C}\setminus\{x_1\}$. Fix any arbitrary candidate $x\in \mathcal{C}\setminus\{x_1\}$. The votes in $\mathcal{V}_1$ that are same as in $\mathcal{V}$ contributes same quantity to both side of the equality. Let $v$ be a vote that has been changed as described above. If $x = a_1$ then, the vote $v$ contributes one to both sides of the equality. If $x \ne a_1$, then the vote contributes zero to both sides of the equality. Hence, we have the claim. We repeat this process for $(m-1)$ times. Let $\mathcal{V}_i$ be the set of votes after the candidate $x_i$ gets eliminated. Now, in the above argument, by replacing $\mathcal{V}$ by $\mathcal{V}_{i-1}$, $\mathcal{V}_1$ by $\mathcal{V}_i$, the candidate set $\mathcal{C}$ by $\mathcal{C}\setminus \{x_1, \ldots, x_{i-1}\}$, and the candidate $x_1$ by the candidate $x_i$, we have the 
following.
 $$ s_{\mathcal{V}_{i-1}}(\mathcal{C}_{-i}, x) = s_{\mathcal{V}_i}(\mathcal{C}_{-i}, x) \forall x\in \mathcal{C}\setminus\{x_1, \ldots, x_i\}$$
 Hence, we have the following.
 $$ s_{\mathcal{V}}(\mathcal{C}_{-i}, x) = s_{\mathcal{V}_i}(\mathcal{C}_{-i}, x) \forall x\in \mathcal{C}\setminus\{x_1, \ldots, x_i\}$$
 In the above process, the total number of votes that are changed is $ \sum_{i=0}^{m-1} \left(s_{\mathcal{V}}\left({\mathcal{C}_{-i}}, x_{i+1}\right) - s^*_{\mathcal{V}}\left({\mathcal{C}_{-i}}\right)\right) $.\qed
\end{proof}

\begin{theorem}\label{thm:stv}
 There is a $(\epsilon, \delta)$-winner determination algorithm for the STV voting rule with sample complexity $ O(\frac{m^2(m+\log \frac{1}{\delta})}{\epsilon^2})$.
\end{theorem}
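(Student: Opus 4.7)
The plan is to sample $\ell$ votes uniformly with replacement and, from the sample, estimate the restricted plurality score $s_{\mathcal{V}}(A,x)$ for \emph{every} subset $A\subseteq \mathcal{C}$ and every candidate $x\in A$. Specifically, for the $i$-th sample vote $v_i$ and each pair $(A,x)$, let $X_i^{A,x}$ be the indicator of the event that $x$ is the highest-ranked candidate of $A$ in $v_i$, and set $\hat{s}(A,x) = (n/\ell)\sum_i X_i^{A,x}$. We then run the STV rule on the sampled votes and output its winner $\hat{w}$. The strategy for the analysis is to show that with high probability the sample elimination sequence terminates at the true winner $w$, by combining a Chernoff-plus-union-bound on the $\hat{s}(A,x)$ with the quantitative stability guaranteed by \Cref{lem:stv}.

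For the concentration step, I would fix a target accuracy of $\epsilon n/(4m)$ per estimate. By the additive Chernoff bound (\Cref{thm:chernoff}) applied to $\sum_i X_i^{A,x}$, each individual $\hat{s}(A,x)$ fails to lie within $\epsilon n/(4m)$ of $s_{\mathcal{V}}(A,x)$ with probability at most $2\exp(-\ell \epsilon^2/(8m^2))$. Taking a union bound over all at most $2^m\cdot m$ pairs $(A,x)$ and forcing the total failure probability below $\delta$ yields the sufficient sample size
\[
  \ell \;=\; O\!\left(\tfrac{m^2}{\epsilon^2}\bigl(m + \log(m/\delta)\bigr)\right) \;=\; O\!\left(\tfrac{m^2(m+\log(1/\delta))}{\epsilon^2}\right),
\]
matching the claimed bound (the $\log m$ is absorbed into the $m$ term). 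Note also that if all $\hat{s}(A,x)$ are within $\epsilon n/(4m)$ of $s_{\mathcal{V}}(A,x)$, then likewise $|\hat{s}^*(A) - s^*_{\mathcal{V}}(A)| \le \epsilon n/(4m)$ for every $A$, because taking a minimum preserves additive closeness.

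For the correctness step, condition on the good event above and let $(x_1,\ldots,x_m)$ be the elimination order produced by STV on the sampled votes, so that $\hat{w}=x_m$. By definition of STV, the candidate eliminated in round $i+1$ achieves the minimum restricted plurality score on the sample, i.e.\ $\hat{s}(\mathcal{C}_{-i},x_{i+1}) = \hat{s}^*(\mathcal{C}_{-i})$ for every $i=0,\ldots,m-2$, so $\sum_{i=0}^{m-1}(\hat{s}(\mathcal{C}_{-i},x_{i+1})-\hat{s}^*(\mathcal{C}_{-i})) = 0$. If $\hat{w}\ne w$, then \Cref{lem:stv} applied to the chain $(x_1,\ldots,x_m)$ forces $\sum_{i=0}^{m-1}(s_{\mathcal{V}}(\mathcal{C}_{-i},x_{i+1})-s^*_{\mathcal{V}}(\mathcal{C}_{-i}))\ge \epsilon n$. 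Subtracting and using the per-estimate error bound on the $2m$ quantities involved gives
\[
  \epsilon n \;\le\; \sum_{i=0}^{m-1}\bigl((s-\hat s)(\mathcal{C}_{-i},x_{i+1}) + (\hat s^*-s^*)(\mathcal{C}_{-i})\bigr) \;\le\; 2m\cdot \tfrac{\epsilon n}{4m} \;=\; \tfrac{\epsilon n}{2},
\]
a contradiction. Hence $\hat{w}=w$ on the good event, which completes the proof.

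The main obstacle is precisely what drives the cubic-in-$m$ sample complexity: because the sample elimination order is determined by the sample itself, the argument must control $\hat s(A,x)$ simultaneously over \emph{all} $2^m$ possible subsets $A$ that could arise as some $\mathcal{C}_{-i}$, not just over the subsets encountered in the true STV run. This exponential union bound contributes the additive $m$ inside the log, and the per-round slack $\epsilon n/(4m)$ (needed to fight the $m$ rounds in \Cref{lem:stv}) contributes the $m^2$ factor in front of $1/\epsilon^2$; together these produce the stated bound.
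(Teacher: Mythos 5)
Your proposal is correct and follows essentially the same route as the paper: sample votes, estimate the restricted plurality scores $s_{\mathcal{V}}(A,x)$ for all $2^m\cdot m$ pairs $(A,x)$ to within $O(\epsilon n/m)$ via Chernoff plus a union bound, and invoke \Cref{lem:stv} to rule out a wrong sample winner. You have in fact written out, with clean constants, the final contradiction step that the paper only sketches as ``analogous to \Cref{thm:gen}'', so no gap remains.
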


\begin{proof}
 We sample $l$ votes uniformly at random from the set of votes with replacement and output the STV winner of those $l$ votes say $w^{\prime}$ as the winner of the election. Let, $w$ be the winner of the election. We will show that there exist $l =  O(\frac{m^2(m+\log \frac{1}{\delta})}{\epsilon^2})$ for which $w=w^{\prime}$ with probability at least $1 - \delta$. Let $A$ be an arbitrary subset of candidates and $x$ be any candidate in $A$. Let us define a random variables $X_i, 1\le i\le l$ as follows.
 $$ X_i = \begin{cases}
           1,& \text{if } x \text{ is at top } i^{th} \text{ sample when restricted to } A\\
           0,& \text{else}
          \end{cases}
 $$
 Define another random variable $\hat{s}_{\mathcal{V}}(A,x) := \sum_{i=1}^l X_i$. Then we have, $E[\hat{s}_{\mathcal{V}}(A,x)] = s_{\mathcal{V}}(A,x)$. Now, using Chernoff bound, we have the following,
 $$ \Pr[ |\hat{s}_{\mathcal{V}}(A,x) - s_{\mathcal{V}}(A,x)| > \frac{\epsilon n}{m} ] \le \frac{2}{\exp\{\frac{\epsilon^2 l}{3m^2}\}}$$
 Let $E$ be the event that $\exists A\subset \mathcal{C} \text{ and } \exists x\in A, |\hat{s}_{\mathcal{V}}(A,x) - s_{\mathcal{V}}(A,x)| > \frac{\epsilon n}{m}$. By union bound, we have,
 \begin{eqnarray*}
  \Pr[ \bar{E} ] &\ge& 1 - \frac{m2^{m+1}}{\exp\{\frac{\epsilon^2 l}{3m^2}\}}
 \end{eqnarray*}
 The rest of the proof follows by an argument analogous to the proof of \Cref{thm:gen} using \Cref{lem:stv}.\qed
\end{proof}

\section{Conclusion}\label{sec:con}

In this work, we introduced the $(\eps,\delta)$-winner determination
problem and showed (often tight) bounds for the sample complexity for
many common voting rules. Besides closing the remaining gaps in the
bounds, here are a few open directions to pursue in
the future:
\begin{itemize}
\item
Is there an axiomatic characterization of the voting rules for which
the sample complexity is independent of $m$ and $n$? We note that a
similar problem in graph property testing was the subject of intense
study \citep{alon2006acomb, borgs2006graph}. 
\item
Specifically for scoring rules, is the sample complexity determined by
some natural property of the score vector, such as its sparsity? 
\item
Is it worthwhile for the algorithm to elicit only part of the vote
from each sampled voter instead of the full vote? As mentioned in the
Introduction, vote elicitation is a well-trodden area, but as far
as we know, it has not been studied how assuming a margin of victory
can change the number of queries.
\item
How can knowledge of a social network on the voters be used to
minimize the number of samples made? Some initial progress in this
direction has been made by Dhamal and Narahari \citep{dhamal2013scalable} and by Agrawal
and Devanur (private communication).

\end{itemize}

\bibliographystyle{apalike}

\bibliography{bwd}

\end{document}